\newcommand{\vc}[1]{\boldsymbol{#1}}
\newcommand{\eps}{\varepsilon}
\newcommand{\vol}{\operatorname{vol}}
\newcommand{\cS}{\mathcal{S}}
\newcommand{\cB}{\mathcal{B}}
\newcommand{\cL}{\mathcal{L}}
\newcommand{\cC}{\mathcal{C}}
\newcommand{\cH}{\mathcal{H}}
\newcommand{\cV}{\mathcal{V}}
\newcommand{\cK}{\mathcal{K}}
\newcommand{\cR}{\mathcal{R}}
\newcommand{\Sp}{\mathrm{S}}
\newcommand{\T}{\mathrm{T}}
\theoremstyle{definition}\newtheorem{heuristic}[theorem]{Heuristic assumption}
\title{Approximate Voronoi cells for lattices, revisited}
\author{Thijs Laarhoven}\lastnameone{Laarhoven}
\abstract{We revisit the approximate Voronoi cells approach for solving the closest vector problem with preprocessing (CVPP) on high-dimensional lattices, and settle the open problem of Doulgerakis--Laarhoven--De Weger [PQCrypto, 2019] of determining exact asymptotics on the volume of these Voronoi cells under the Gaussian heuristic. As a result, we obtain improved upper bounds on the time complexity of the randomized iterative slicer when using less than $2^{0.076d + o(d)}$ memory, and we show how to obtain time--memory trade-offs even when using less than $2^{0.048d + o(d)}$ memory. We also settle the open problem of obtaining a continuous trade-off between the size of the advice and the query time complexity, as the time complexity with subexponential advice in our approach scales as $d^{d/2 + o(d)}$, matching worst-case enumeration bounds, and achieving the same asymptotic scaling as average-case enumeration algorithms for the closest vector problem.}
\keywords{Voronoi cells, polytopes, volume estimation, lattices, closest vector problem}
\begin{document}

%WWWWWWWWWWWWWWWWWWWWWWWWWWWWWWWWWWWWWWWWWWWWWWWWWWWWWWWW
%WWWWWWWWWWWWWWWWWWWWWWWWWWWWWWWWWWWWWWWWWWWWWWWWWWWWWWWW

\section{Introduction}

%\paragraph{Lattice-based cryptography.} 
Ever since the discovery of polynomial-time quantum attacks on widely deployed public-key cryptosystems~\cite{shor94}, researchers have been looking for ways to construct cryptographic schemes whose security relies on problems which remain hard even when large-scale quantum computers become a reality~\cite{bernstein09, nist17, etsi19}. A prominent class of potentially ``post-quantum'' cryptosystems~\cite{ajtai97, regev05, stehle09} relies on the hardness of lattice problems, such as the shortest (SVP) and closest vector problems (CVP). Understanding their hardness is essential for an efficient and reliable deployment of lattice-based cryptographic schemes in practice.  

Over time, the practical hardness of SVP and CVP has been quite well studied, with two classes of algorithms emerging as the most competitive: \textit{enumeration}~\cite{kannan83, fincke85, gama10, micciancio15, aono17, aono18}, running in superexponential time $2^{\Theta(d \log d)}$ in the lattice dimension $d$ (the main security parameter), using a negligible amount of space; and \textit{sieving}~\cite{ajtai01, nguyen08, micciancio10b, laarhoven15crypto, ducas18, herold18, albrecht19}, running in only exponential time $2^{\Theta(d)}$, but also requiring an amount of memory scaling as $2^{\Theta(d)}$. The best asymptotic time complexities for enumeration ($d^{d/2e + o(d)}$ for SVP, $d^{d/2 + o(d)}$ for CVP~\cite{hanrot07}) and sieving ($(3/2)^{d/2 + o(d)}$ for both SVP and CVP~\cite{becker16lsf, laarhoven16sac}) have remained unchanged since 2007 and 2016 respectively,\footnote{This statement concerns classical complexities; for quantum complexities, see e.g.~\cite{laarhoven15dcc, aono18}.} and recent work has mainly focused on decreasing second-order terms in the time and space complexities~\cite{gama10, aono17, laarhoven18pqcrypto, ducas18, albrecht19}.

A close relative to CVP, the closest vector problem with preprocessing (CVPP), has received far less attention~\cite{micciancio01e, aharonov04, bonifas15, stephens19} -- from a practical point of view, only a few recent works have studied how preprocessing can be used to speed up CVP~\cite{laarhoven16sac, doulgerakis19}. Since a fast CVPP algorithm would imply faster lattice enumeration algorithms for SVP/CVP~\cite{gama10, laarhoven16sac, doulgerakis19}, faster approximate-SVP algorithms for ideal lattices~\cite{pellet19, stephens19}, and even faster isogeny-based cryptography~\cite{beullens19}, a better understanding of the hardness of CVPP is needed.

\subsection{Approximate Voronoi cells} 

A natural approach for solving nearest-point queries for large data sets is to use \textit{Voronoi cells}; partitioning the space in regions, where each cell contains all points closer to the point in this cell than to any other point in the data set. Micciancio--Voulgaris~\cite{micciancio10} proposed an algorithm for constructing the Voronoi cell $\cV$ of a lattice in time $2^{2d + o(d)}$ and space $2^{d + o(d)}$, which can then be used to solve CVPP in time $2^{2d + o(d)}$. Bonifas--Dadush~\cite{bonifas15} later improved the query time complexity to only $2^{d + o(d)}$, but with the best heuristic algorithms for CVP running in time and space less than $2^{0.3d + o(d)}$, using exact Voronoi cells seems impractical. 

To make the Voronoi cells approach practical, Laarhoven~\cite{laarhoven16sac} and Doulgerakis--Laarhoven--De Weger (DLW)~\cite{doulgerakis19} proposed constructing \textit{approximate Voronoi cells} of the lattice, and using a randomized version of the iterative slicer algorithm of Sommer--Feder--Shalvi~\cite{sommer09} for solving CVP queries. These cells $\cV_L$, defined by a list of lattice vectors $L \subset \cL$, can be seen as rough, low-memory approximations to the exact Voronoi cell $\cV$ -- low-quality representations of the same object, which attempt to model the object as well as possible within the limited space available. These approximate representations are lossy, but are also smaller and easier to store (less memory) and faster to process (less time).

For analyzing the performance of this approach, DLW conjectured a relation between the performance of the algorithm and how well $\cV_L$ approximates $\cV$:
\begin{align}
p = \Pr(\text{the iterative slicer, with input $L$, solves CVP}) \stackrel{?}{\approx} \frac{\vol(\cV)}{\vol(\cV_L)} \, .
\end{align} 
They then obtained upper bounds on the volume of $\cV_L$ relative to $\cV$ by studying the success probability of the randomized slicer. An open problem from DLW was to better study the volumes of these approximate Voronoi cells, as this may lead to tighter bounds on their CVPP algorithm. Furthermore, the time--space trade-offs from DLW seemed somewhat unnatural --- the query time complexity diverges when the memory is less than $2^{0.05d + o(d)}$ --- and a second open problem was to obtain time complexities scaling as $2^{\Theta(d)}$ for arbitrary memory complexities $2^{\Omega(d)}$.

\subsection{Volumes of approximate Voronoi cells}

In this paper we take a fundamental approach to studying the shape of approximate Voronoi cells. We model this problem as estimating the volume of the intersection of a large number of random half-spaces, and we solve the latter problem exactly for the main regimes of interest. In particular, without any heuristic assumptions, we prove the following result regarding the volume of a random polytope obtained by intersecting a large number of random half-spaces. Assuming that the distribution of lattice points inside a large ball can be approximated well by a uniform distribution over the ball, this then leads to a tight asymptotic estimate of the volume of approximate Voronoi cells.
\begin{theorem}[Volume of approximate Voronoi cells] \label{cor:lattice}
Let $\alpha > 1$, and let $L \subset \cL \setminus \{\vc{0}\}$ consist of the $\alpha^d$ shortest non-zero vectors of a lattice $\cL$. Then, assuming the Gaussian heuristic holds, with probability $1 - o(1)$ we have:
\begin{align}
\alpha \leq \sqrt{2} \quad \implies \quad \vol(\cV_L) &= \left(\frac{\alpha^4}{4 \alpha^2 - 4}\right)^{d/2 + o(d)} \vol(\cV); \label{eq:lattice1} \\
\alpha \geq \sqrt{2} \quad \implies \quad \vol(\cV_L) &= (1 + o(1))^{d + o(d)} \vol(\cV). \label{eq:lattice2}
\end{align}
\end{theorem}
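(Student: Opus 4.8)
\emph{Reduction.} The plan is to recast the statement as a volume estimate for the intersection of a large number of random half-spaces, and then to pin that volume down by a first-moment (Laplace-type) argument. Recall that $\cV_L=\bigcap_{\vc v\in L}\cH_{\vc v}$, where $\cH_{\vc v}=\{\vc x:\langle\vc x,\vc v\rangle\le\tfrac12\|\vc v\|^2\}=\{\vc x:\|\vc x\|\le\|\vc x-\vc v\|\}$ is the half-space bounded by the perpendicular bisector of $\vc 0$ and $\vc v$. Since $\langle\vc x,\vc v\rangle>\tfrac12\|\vc v\|^2$ is equivalent to $\|\vc v-\vc x\|<\|\vc x\|$, a point $\vc x$ lies in $\cV_L$ exactly when the open ball $B(\vc x,\|\vc x\|)$ contains no vector of $L$. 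Rescaling $\cL$ by $1/\GH$ turns $\vol(\cV)=\vol(\cL)$ into $\vol(B^d_1)$, and the Gaussian heuristic then makes the nonzero lattice points behave like a Poisson process of intensity $1/\vol(B^d_1)$; in particular $L$, the $\alpha^d$ shortest vectors, behaves like $N:=\alpha^{d}$ points spread over the ball $B(\vc 0,\alpha(1+o(1)))$, and $\vol(\cV_L)/\vol(\cV)$ becomes the (normalized) volume of a random polytope cut out by $N$ such half-spaces. This normalized random polytope, which can be isolated as a lemma with no mention of lattices, is what I would actually estimate.

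\emph{First moment.} For a fixed point $\vc x$ with $\|\vc x\|=t$, the probability that no short vector cuts $\vc x$ off is $\approx e^{-\Lambda(t)}$, where $\Lambda(t):=\vol\!\left(B(\vc x,t)\cap B(\vc 0,\alpha)\right)/\vol(B^d_1)$ is the expected number of vectors of $L$ that land in $B(\vc x,t)$. Hence $\expn\,\vol(\cV_L)=\int_{\mathbb R^d}\Pr[\vc x\in\cV_L]\,d\vc x = d\,\vol(B^d_1)\int_0^\infty e^{-\Lambda(t)}\,t^{d-1}\,dt$. The function $\Lambda$ is non-decreasing, and $\tfrac1d\log\Lambda(t)\to g(t)$ as $d\to\infty$ for an explicit rate $g$; let $t^\star$ denote the value with $g(t^\star)=0$. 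For $t<t^\star$ the quantity $\Lambda(t)$ is subexponential, so $\Pr[\vc x\in\cV_L]\to1$; for $t>t^\star$ it is exponentially large, so $\Pr[\vc x\in\cV_L]$ decays doubly exponentially. A Laplace argument then forces the integral to concentrate on the ball of radius $t^\star$, giving $\vol(\cV_L)=(t^\star)^{\,d+o(d)}\,\vol(\cV)$.

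\emph{The threshold.} Computing $t^\star$ is the geometric heart of the proof. Here $\Lambda(t)$ is the normalized volume of a lens: the intersection of $B(\vc 0,\alpha)$ with a ball of radius $t$ whose centre lies at distance $t$ from the origin. Slicing perpendicular to $\vc x$, the cross-sections are $(d-1)$-balls whose radii are elementary, and Laplace's method gives $\tfrac1d\log\Lambda(t)\to\log\rho(t)$, where $\rho(t)$ is the largest cross-sectional radius. Optimizing over the slice position yields $\rho(t)=t$ for $t\le\alpha/\sqrt2$ (the radius-$t$ ball is the binding constraint) and $\rho(t)^2=\alpha^2\!\left(1-\tfrac{\alpha^2}{4t^2}\right)$ for $t\ge\alpha/\sqrt2$. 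Solving $\rho(t^\star)=1$: if $\alpha\ge\sqrt2$ the root sits in the first regime, so $t^\star=1$ and $\vol(\cV_L)=(1+o(1))^{d+o(d)}\vol(\cV)$, which is \eqref{eq:lattice2}; if $1<\alpha<\sqrt2$ the root sits in the second regime, so $(t^\star)^2=\alpha^4/(4\alpha^2-4)$ and $\vol(\cV_L)=\left(\alpha^4/(4\alpha^2-4)\right)^{d/2+o(d)}\vol(\cV)$, which is \eqref{eq:lattice1}. The two formulas coincide at $\alpha=\sqrt2$, as they must.

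\emph{Main obstacle.} Two points will require genuine care. First, upgrading the expectation to a statement holding with probability $1-o(1)$: in the random model $\expn\,\vol(\cV_L)$ is actually infinite, dominated by the exponentially unlikely event that a large region near the origin is free of short vectors, so the upper bound $\vol(\cV_L)\le(t^\star+\eps)^{\,d+o(d)}\vol(\cV)$ should be obtained from a union bound over an $\eps$-net of directions (past $t^\star+\eps$ the rate $\Lambda$ is already exponentially large), while the matching lower bound follows from a first-moment bound on $\vol\!\left(B(\vc 0,t^\star-\eps)\setminus\cV_L\right)$; both estimates are exponentially sharp. Second --- and this is the gain over Doulgerakis--Laarhoven--De Weger --- one must track all the $2^{o(d)}$ factors: the subexponential prefactor of the lens volume, the $o(d)$ slack in $N=\alpha^{d}(1+o(1))$, and, most importantly, the sharpness of the transition of $\Lambda$ across $t^\star$. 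That last item is the real work; everything downstream is bookkeeping around the Gaussian heuristic.
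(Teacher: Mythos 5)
Your proposal is correct, and it reaches \eqref{eq:lattice1}--\eqref{eq:lattice2} by a route that is organized quite differently from the paper's. The paper first proves a sphere version (Theorem~\ref{thm:sphere}: lower bound by subtracting cap volumes, upper bound by a covering argument), then handles the ball (Theorem~\ref{thm:ball}) by partitioning $L$ into thin spherical shells, applying the sphere theorem to each shell, and minimizing the resulting radius $f(\beta)=\beta^2\alpha^2/(4\beta\alpha^2-4)$ over shells; the case split at $\alpha=\sqrt{2}$ arises from whether the unconstrained minimizer $\beta=2/\alpha^2$ is attained by an actual shell. You instead make a single pointwise first-moment computation: $\Pr[\vc x\in\cV_L]\approx e^{-\Lambda(\|\vc x\|)}$ with $\Lambda$ the normalized volume of the lens $B(\vc x,t)\cap B(\vc 0,\alpha)$, and the polytope's effective radius is the root of $\rho(t)=1$, the case split coming from which branch of $\rho$ contains that root. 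Your lens volume implicitly aggregates the per-shell contributions and is dominated by the worst shell, so the two computations agree (I checked your $\rho(t)$ and the resulting $t^\star$; they reproduce the paper's $f$-minimization exactly), but your packaging avoids the shell decomposition and the intermediate sphere theorem, and it makes the saturation at $\alpha=\sqrt{2}$ transparent: one expected lattice point falls in $B(\vc x,\|\vc x\|)$ once $\|\vc x\|=1$. What the paper's organization buys in exchange is a standalone, reusable statement about random points on a sphere that connects to the prior literature on random polytopes. The high-probability upgrade is essentially shared between the two: your first-moment bound on $\vol(B(\vc 0,t^\star-\eps)\setminus\cV_L)$ is the paper's lower-bound argument, and your $\eps$-net union bound plays the role of the paper's sphere-covering argument -- indeed your explicit observation that $\expn\,\vol(\cV_L)=\infty$ in the random model (the polytope is unbounded with small but positive probability), so that the upper bound cannot come from the expectation alone, is a point the paper's own covering argument passes over rather quickly. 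The one step you should flag as an assumption rather than a deduction is the replacement of the $\alpha^d$ shortest lattice vectors by a Poisson process on the $\alpha$-ball: this is marginally stronger than, but in the same spirit as, the paper's use of the Gaussian heuristic to model $L$ as i.i.d.\ uniform points in the ball in Corollary~\ref{cor:lattice2}.
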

Assuming~\cite[Heuristic assumption 1]{doulgerakis19} holds (which has been restated here as Heuristic~\ref{heur:rand}), this result would then imply what are the exact asymptotic time and space complexities of the randomized slicer. However, under the same assumption, DLW derived the following asymptotic upper bound on the relative volume of approximate Voronoi cells, for $\alpha \in (1, \sqrt{2})$:
\begin{align}
\frac{\vol(\cV_L)}{\vol(\cV)} \stackrel{?}{\leq} \left(\frac{16 \alpha^4 \left(\alpha^2 - 1\right)}{-9 \alpha^8 + 64 \alpha^6 - 104 \alpha^4 + 64 \alpha^2 - 16}\right)^{d/2 + o(d)}. \label{eq:dlw}
\end{align}
Looking closely, \eqref{eq:lattice1} in fact \textit{contradicts} the above upper bound for $\alpha > \frac{1}{3} \sqrt{10} \approx 1.054$. The source of this contradiction can be found in~\cite[Heuristic assumption 1]{doulgerakis19}: while this assumption states that the success probability $p$ of the randomized slicer is \textit{exactly} $p = \vol(\cV)/\vol(\cV_L)$, the randomized slicer is in fact \textit{more likely} to converge to short solutions than to long solutions: we may well have $p \gg \vol(\cV)/\vol(\cV_L)$, and the gap between both quantities may be exponentially large in $d$. A lower bound on $p$ therefore does not necessarily translate to a lower bound on $\vol(\cV)/\vol(\cV_L)$, or to an upper bound on its reciprocal.

\subsection{Application to CVPP}
 
Although \eqref{eq:dlw} is incorrect as an upper bound on the volume of approximate Voronoi cells, on closer inspection we see that to bound the complexity of their algorithm, DLW in fact proved that $p$ is at most the RHS of \eqref{eq:dlw}: the bound on the volume of the approximate Voronoi cell was then only obtained through transitivity by applying \cite[Heuristic assumption 1]{doulgerakis19}. Thus, letting $p_{\alpha}$ denote the success probability of the randomized slicer when using a list of the $n = \alpha^d$ shortest non-zero vectors in the lattice, we now have two heuristic lower bounds on $p_{\alpha}$:
\begin{align}
\text{(DLW)} \qquad p_{\alpha} &\geq \left(\frac{-9 \alpha^8 + 64 \alpha^6 - 104 \alpha^4 + 64 \alpha^2 - 16}{16 \alpha^4 \left(\alpha^2 - 1\right)}\right)^{d/2 + o(d)}; \label{eq:dlw111} \\
\text{(ours)} \qquad p_{\alpha} &\geq \left(\frac{4 \alpha^2 - 4}{\alpha^4}\right)^{d/2 + o(d)}.
\end{align}
These bounds are both conditional on the Gaussian heuristic, and the second result holds conditional on $p_{\alpha} \geq \vol(\cV) / \vol(\cV_L)$. By applying similar techniques from~\cite{doulgerakis19}, we obtain the following CVPP complexities, where $\delta = \sqrt{\alpha^2 - 1} / \alpha$.
\begin{theorem}[CVPP complexities] \label{thm:cvpp}
Let $\alpha \in (1, \sqrt{2})$ and $u \in (\delta, \frac{1}{\delta})$. Then we can heuristically solve CVPP with query space and time $\Sp$ and $\T$, where:
\begin{align}
% \Sp_1 &= \max\left\{\Sp_2, \, \left(\frac{4}{3}\right)^{d/2 + o(d)}\right\}, \qquad \T_1 = \max\left\{\Sp_2, \, \left(\frac{3}{2}\right)^{d/2 + o(d)}\right\}, \\
 \Sp &= \left(\frac{\alpha}{\alpha - (\alpha^2 - 1) (\alpha u^2 - 2 u \sqrt{\alpha^2 - 1} + \alpha)}\right)^{d/2 + o(d)}, \\
 \T &= \left(\frac{\alpha^4}{4 \alpha^2 - 4} \cdot \frac{\alpha + u \sqrt{\alpha^2 - 1}}{{-\alpha^3 + \alpha^2 u \sqrt{\alpha^2 - 1} + 2 \alpha}}\right)^{d/2 + o(d)}.
\end{align}
The best query complexities $(\Sp, \T)$ together form the blue curve in Figure~\ref{fig:to}.
\end{theorem}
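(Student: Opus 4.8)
The plan is to reuse the query-side machinery of the randomized iterative slicer from~\cite{doulgerakis19} (which in turn builds on the near-neighbor techniques of~\cite{laarhoven16sac}), while feeding it the sharper success-probability estimate coming out of Theorem~\ref{cor:lattice}. Under the Gaussian heuristic, and assuming $p_\alpha \geq \vol(\cV)/\vol(\cV_L)$ (Heuristic~\ref{heur:rand}), \eqref{eq:lattice1} gives $p_\alpha \geq \left(\frac{4\alpha^2-4}{\alpha^4}\right)^{d/2+o(d)}$, so a single pass of the iterative slicer on a rerandomized target succeeds with this probability and $N := \left(\frac{\alpha^4}{4\alpha^2-4}\right)^{d/2+o(d)}$ (roughly) independent passes answer a query with probability $1-o(1)$, the extra amplification factor being absorbed into the $o(d)$. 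I would normalize so that $\GH = 1$; the list $L$ of the $\alpha^d$ shortest nonzero vectors then consists (heuristically) of lattice vectors of norm at most $\alpha+o(1)$, and constructing $L$ is a one-time $2^{\Theta(d)}$ preprocessing cost that is not charged to the query complexities $\Sp$ and $\T$.

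Next I would bound the cost of a single pass and then introduce the space--time parameter $u$. Each reduction step looks up a list vector $\vc{v}$ with $\|\vc{t}_i - \vc{v}\| < \|\vc{t}_i\|$ for the current intermediate target $\vc{t}_i$ (of norm at most $\alpha+o(1)$), and the number of such steps per pass is subexponential and absorbed into the $o(d)$ exponents, exactly as in~\cite{doulgerakis19}. Instead of scanning $L$ linearly at cost $\alpha^d$, equip $L$ with a near-neighbor (locality-sensitive filtering) data structure whose granularity is governed by $u$: a reduction step then probes only $V^{d/2+o(d)}$ list entries, where $V$ is the appropriate spherical-cap volume, while the stored filter buckets enlarge the advice by a factor $W^{d/2+o(d)}$. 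The caps in play are precisely those carved out by the reduction inequality for pairs $(\vc{t}_i,\vc{v})$ of norms up to $\alpha$ --- the same cap volumes already computed in~\cite{doulgerakis19,laarhoven16sac} --- with $\delta = \sqrt{\alpha^2-1}/\alpha$ the critical ratio controlling the relevant angles. Carrying out this bookkeeping, $N\cdot V^{d/2+o(d)}$ collapses to the claimed $\T$ and the list-plus-buckets advice $\alpha^d\cdot W^{d/2+o(d)}$ collapses to the claimed $\Sp$, for $u\in(\delta,1/\delta)$. I would verify the endpoints: at $u = \delta$ one recovers $\Sp = \alpha^d$ and, for instance, $\T = (3/2)^{d/2}$ when $\alpha = \sqrt{2}$, whereas $u \to 1/\delta$ sends $\Sp\to\infty$ and $\T$ subexponential. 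Keeping, for each memory budget, the best $(\Sp,\T)$ over admissible $(\alpha,u)$ then traces out the blue curve of Figure~\ref{fig:to}.

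I expect the principal obstacle to be cleanly decoupling the two analyses, rather than the individual cap integrals. In~\cite{doulgerakis19} the slicer's success probability was interwoven with the near-neighbor cost estimate, and the now-superseded bound~\eqref{eq:dlw111} on $p_\alpha$ was threaded through their complexity derivation; to substitute the larger $p_\alpha \geq \left(\frac{4\alpha^2-4}{\alpha^4}\right)^{d/2+o(d)}$ one must separate the parts of that derivation that genuinely depend on the value of $p_\alpha$ from the parts that depend only on the geometry of the intermediate targets --- their norms and reduction angles --- and check that the near-neighbor optimization can be rerun around the new $p_\alpha$ without disturbing the worst-case configuration it optimizes against. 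A secondary point requiring care is that every error term swept into the exponents (the subexponentially many reduction steps per pass, the amplification to success probability $1-o(1)$, and the discretization of both $u$ and the norm scales) is truly $2^{o(d)}$, so that the closed forms for $\Sp$ and $\T$ hold exactly as displayed.
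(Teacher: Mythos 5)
Your proposal takes essentially the same route as the paper, whose own justification is even terser: it simply substitutes the new volume-based lower bound $p_\alpha \geq \left(\frac{4\alpha^2-4}{\alpha^4}\right)^{d/2+o(d)}$ into the DLW randomized-slicer analysis and reruns the same nearest-neighbor (locality-sensitive filtering) trade-off parameterized by $u$, exactly as you describe. Two small slips worth fixing: the inequality $p_\alpha \gtrsim \vol(\cV)/\vol(\cV_L)$ is Heuristic~\ref{heur:rand2}, not Heuristic~\ref{heur:rand} (which asserts equality and is the assumption the paper argues against), and as $u \to 1/\delta$ only the per-reduction lookup cost becomes subexponential while $\T$ tends to $\left(\frac{\alpha^4}{4\alpha^2-4}\right)^{d/2+o(d)}$, which remains exponential for $\alpha \in (1,\sqrt{2})$.
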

%
%\subsection{CVPP in the low-memory regime}

As we can see in Figure~\ref{fig:to}, for the low-memory regime of less than $2^{0.076d + o(d)}$ memory, we obtain strictly better query time complexities than~\cite{doulgerakis19}. The trade-offs from~\cite{doulgerakis19} were further limited to the regime of using at least $2^{0.048d + o(d)}$ memory, whereas Theorem~\ref{thm:cvpp} describes a continuous trade-off between the query time and space complexities: for arbitrary memory complexities $2^{\eps d + o(d)}$ with $\eps > 0$, we obtain a query time complexity $2^{\Theta(d)}$. Extending Theorem~\ref{thm:cvpp} to the regime of $\alpha = 1 + o(1)$, we obtain the following result.
\begin{corollary}[Polynomial advice for CVPP]
Using $d^{\Theta(1)}$ memory, we can heuristically solve CVPP in time $d^{d/2 + o(d)}$.
\end{corollary}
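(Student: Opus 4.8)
The plan is to derive the corollary as the limiting case $\alpha \to 1^+$ of Theorem~\ref{thm:cvpp}. I would choose $\alpha = \alpha(d) = 1 + \Theta(\log d / d)$, so that the advice list $L$ consisting of the $n = \alpha^d$ shortest non-zero vectors of $\cL$ has cardinality $n = d^{\Theta(1)}$; since these vectors all have norm at most $\alpha \cdot \GH$ and hence (in a suitably reduced basis, which the unbounded preprocessing phase may compute) polynomially bounded coordinates, storing $L$ takes $d^{\Theta(1)}$ bits. With this choice $\delta = \sqrt{\alpha^2 - 1}/\alpha = \Theta(\sqrt{\log d / d}) \to 0$, so for all sufficiently large $d$ any fixed $u$ — say $u = 1$ — lies in $(\delta, \frac{1}{\delta})$ and Theorem~\ref{thm:cvpp} is applicable.

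It then suffices to evaluate the expressions for $\Sp$ and $\T$ in this regime. For the space complexity, $\alpha^2 - 1 = \Theta(\log d / d)$ whereas $\alpha u^2 - 2u\sqrt{\alpha^2 - 1} + \alpha = \Theta(1)$, so the term subtracted in the denominator of $\Sp$ is $O(\log d / d)$ and the base of $\Sp$ equals $1 + O(\log d / d)$; raising it to the power $d/2 + o(d)$ yields $\Sp = 2^{O(\log d)} = d^{\Theta(1)}$, matching the cost of storing $L$. For the time complexity, the first factor $\alpha^4 / (4\alpha^2 - 4) = \Theta(d / \log d)$, while the numerator $\alpha + u\sqrt{\alpha^2 - 1}$ and the denominator $-\alpha^3 + \alpha^2 u \sqrt{\alpha^2 - 1} + 2\alpha$ of the second factor both tend to $1$, so the second factor is $1 + o(1)$. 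Hence $\T = \bigl(\Theta(d/\log d)\bigr)^{d/2 + o(d)}$, and since both $(\log d)^{\Theta(d)}$ and $\Theta(1)^{\Theta(d)}$ are of the form $d^{o(d)}$ (indeed $o(d)\cdot\log d = o(d\log d)$), this collapses to $\T = d^{d/2 + o(d)}$.

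The delicate point — and what I expect to be the main obstacle — is that Theorem~\ref{thm:cvpp} (and the volume estimate Theorem~\ref{cor:lattice} underlying it) is stated for a \emph{fixed} constant $\alpha \in (1, \sqrt{2})$, whereas here $\alpha$ varies with $d$ and tends to $1$. One must therefore revisit the derivations and confirm that all the $o(d)$-terms remain $o(d)$ uniformly as $\alpha = 1 + \Theta(\log d / d)$: in particular that the Gaussian-heuristic count of lattice points of norm at most $\alpha \cdot \GH$ is still accurate up to $d^{o(d)}$ factors when only $d^{\Theta(1)}$ such points are involved, and that the success-probability bound $p_\alpha \geq (4(\alpha^2 - 1)/\alpha^4)^{d/2 + o(d)}$ for the randomized slicer degrades gracefully in this limit. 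Granting this, the expected number of rerandomizations is $1/p_\alpha = d^{d/2 + o(d)}$, each running the iterative slicer against the polynomial-size list $L$ in $d^{\Theta(1)}$ time (polynomially many reduction steps, each scanning all of $L$), and the product gives the claimed query time $d^{d/2 + o(d)}$ with $d^{\Theta(1)}$ memory.
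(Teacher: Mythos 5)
Your proposal is correct and matches the paper's own route: the paper likewise obtains the corollary by pushing Theorem~\ref{thm:cvpp} to $\alpha = 1 + \Theta(\log d / d)$, and the uniformity issue you flag is exactly what Section~\ref{sec:cvpp} addresses (the lower bound only uses Equation~\eqref{eq:cap}, which is valid for $\alpha$ depending on $d$, and the covering argument still goes through), yielding $\vol(\cV_L)/\vol(\cV) = 2^{\frac{1}{2} d \log_2 d + o(d \log d)}$ as in Proposition~\ref{thm:ball-eps}. Your asymptotic evaluation of $\Sp$ and $\T$ and the collapse of $(\log d)^{\Theta(d)}$ into $d^{o(d)}$ are all sound.
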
 
This matches the asymptotic worst-case time complexities for solving CVP with enumeration of Hanrot--Stehl\'{e}~\cite{hanrot07}, and with an average-case scaling for enumeration of $d^{d/(2e) + o(d)}$, this is only off by a factor $1/e$ in the exponent compared to practical enumeration methods. We further see that if we use a preprocessed list of size e.g.\ $2^{\Theta(d^{\gamma})}$ for constant $\gamma \in (0,1)$, we heuristically obtain a CVPP time complexity scaling as $2^{\frac{1}{2}(1 - \gamma) d \log_2 d + o(d \log d)}$.

\paragraph{Outline.} Section~\ref{sec:preliminaries} first defines notation and preliminary results. Section~\ref{sec:polytopes} studies the volume of intersections of random halfspaces. Section~\ref{sec:cvpp} describes the application of these results to solving CVPP and the resulting trade-offs. The appendices describe further details on prior work, to make the paper self-contained.

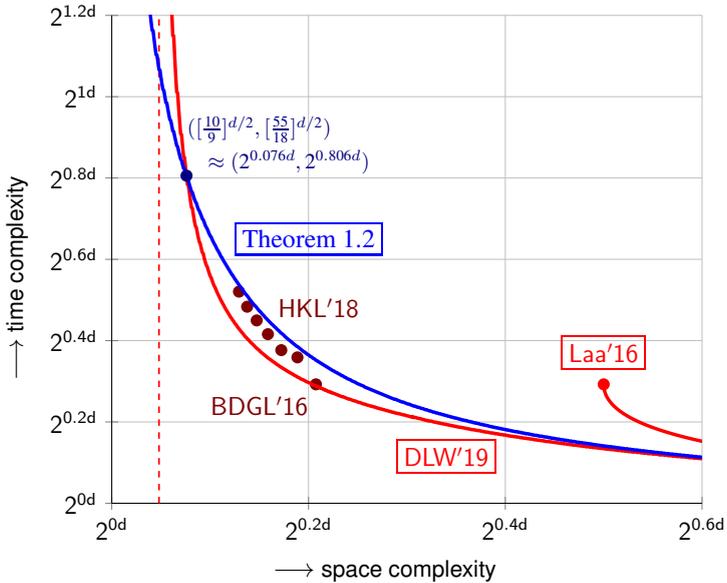
\begin{figure}[!t]
{\center
\pgfplotsset{every tick label/.append style={font=\large}}
\begin{tikzpicture}[scale=0.85]
\begin{axis}[
    xmin=0, 
    xmax=0.6,
    ymin=0, 
    ymax=1.2,
    grid=both,
    xtick={-0.2,0.0,0.2,...,1.2},
    ytick={-0.2,0.0,0.2,...,2.2},
    xscale=1.4,
    yscale=1.4,
    axis y line=left,
    axis x line=bottom,
    yticklabel={$\mathsf{2^{\pgfmathprintnumber{\tick}d}}$},
    xticklabel={$\mathsf{2^{\pgfmathprintnumber{\tick}d}}$},
    axis line style={-},
    x label style={at={(axis description cs:0.33,-0.025)},anchor=north},
    y label style={at={(axis description cs:0,0.33)},anchor=south},
    xlabel={\large \sffamily $\longrightarrow$ space complexity},
    ylabel={\large \sffamily $\longrightarrow$ time complexity}
    ]
    
% BGJ'14    
%\addplot[domain = 1 : sqrt(4/3), samples=100, red, ultra thick] 
%	({log2(4 / (4 - x * x)) / 2}, 
%	{log2(4 / (4 * x - x * x * x))}); 
%\fill [red] (20.75, 41.50) ellipse (2pt and 3pt);
%\fill [red] (29.25, 37.74) ellipse (2pt and 3pt); 

% BDGL'16
%\addplot[domain = 3/2 : 16/9, samples=100, red, dashed, ultra thick] 
%	({log2(4/3) / 2}, 
%	{log2(x) / 2}); 
\fill [red!50!black] (20.75, 29.25) ellipse (2pt and 2pt); 

% HKL'18
\fill [red!50!black] (18.87, 35.88) ellipse (2pt and 2pt);
\fill [red!50!black] (17.23, 37.66) ellipse (2pt and 2pt);
\fill [red!50!black] (15.87, 41.59) ellipse (2pt and 2pt);
\fill [red!50!black] (14.73, 44.97) ellipse (2pt and 2pt);
\fill [red!50!black] (13.76, 48.34) ellipse (2pt and 2pt);
\fill [red!50!black] (12.93, 52.05) ellipse (2pt and 2pt);

% CVPP
\addplot[domain = sqrt(1/2) : sqrt(2), samples=100, red, ultra thick] 
	({log2(1 / (x * (sqrt(2) - x))) / 2}, 
	{log2((sqrt(2) + x) / (2 * x)) / 2}); 
\fill [red] (50, 29.25) ellipse (2pt and 2pt);
%\fill [blue] (63.58, 13.58) ellipse (1.5pt and 1.8 * 1.5pt);

\addplot[domain = 0 : 2, samples=100, red, dashed, thick] 
	(0.048, x); 

% Labels for graphs
%\node [red] at (40,53.0) {\Large \fbox{$\mathsf{CVP}$}};
%\node [red] at (29,43.0) {\large $\mathsf{BGJ'14}$};
\node [red!50!black] at (15,24.0) {\large $\mathsf{BDGL'16}$};
\node [red!50!black] at (21,48.0) {\large $\mathsf{HKL'18}$};

%\node [red] at (70,25.0) {\Large \fbox{$\mathsf{CVPP}$}};
\node [red] at (50,37.0) {\large \fbox{$\mathsf{Laa'16}$}};

%\node [blue] at (30,11.5) {\Large \fbox{$\mathsf{CVPP}$}};
\node [red] at (34,11.5) {\large \fbox{$\mathsf{DLW'19}$}};
\node [blue] at (20,65.0) {\large \fbox{Theorem~\ref{thm:cvpp}}};

\pgfplotstableread{test.dat}
	\datatable

\addplot [color = red, ultra thick] table[y = Time] from \datatable;

\pgfplotstableread{CVP-new.dat}
	\tableee
	
\addplot [color = blue, ultra thick] table[y = Time] from \tableee;

\fill [blue!50!black] (7.60, 80.57) ellipse (2pt and 2pt);

\node [blue!50!black] at (15,92.0) {\small $([\frac{10}{9}]^{d/2}, [\frac{55}{18}]^{d/2})$};
\node [blue!50!black] at (18,84.0) {\small $\approx (2^{0.076d}, 2^{0.806d})$};

\end{axis}
\end{tikzpicture}}
\caption{Query complexities for solving CVPP. The labeled curves and points correspond to the papers~\cite{laarhoven16sac, becker16lsf, herold18, doulgerakis19}. Our new upper bound on the query time complexity improves upon DLW when using less than $(10/9)^{d/2 + o(d)} \approx 2^{0.076d + o(d)}$ memory. Note that, whereas the red DLW-curve diverges as the memory approaches the dashed asymptote $2^{0.048d + o(d)}$ from above, our trade-offs heuristically continue all the way to the regime of subexponential memory. \label{fig:to}}
\end{figure}

%################################ PAGE 5 ###############################
\clearpage
%################################ PAGE 6 ###############################

%WWWWWWWWWWWWWWWWWWWWWWWWWWWWWWWWWWWWWWWWWWWWWWWWWWWWWWWW
%WWWWWWWWWWWWWWWWWWWWWWWWWWWWWWWWWWWWWWWWWWWWWWWWWWWWWWWW

\section{Preliminaries}
\label{sec:preliminaries}

Given a set $\vc{B} = \{\vc{b}_1, \dots, \vc{b}_d\} \subset \mathbb{R}^d$ of linearly independent vectors, we define $\cL = \cL(\vc{B}) := \{\sum_{i=1}^d \lambda_i \vc{b}_i : \vc{\lambda} \in \mathbb{Z}^d\}$ as the lattice generated by $\vc{B}$. We write $\|\cdot\|$ for the Euclidean norm. Given a basis of a lattice and a target vector $\vc{t} \in \mathbb{R}^d$, the closest vector problem (CVP) is to find the vector $\vc{v} \in \cL$ closest to $\vc{t}$. In the preprocessing version (CVPP), the problem is split into two parts: the preprocessing phase (without knowing $\vc{t}$) and the query phase (with knowledge of $\vc{t}$). For CVPP, the task is to do preprocessing such that CVP queries can then be answered more efficiently than when solving CVP directly.

Let us define some basic high-dimensional objects below, where $\vc{v} \in \mathbb{R}^d$.
\begin{align}
\text{(unit sphere)} \qquad \quad \ \cS &:= \{\vc{x} \in \mathbb{R}^d: \|\vc{x}\| = 1\}, \\
\text{(unit ball)} \qquad \quad \ \cB &:= \{\vc{x} \in \mathbb{R}^d: \|\vc{x}\| \leq 1\}, \\
\text{(half-space)} \qquad \ \ \cH_{\vc{v}} &:= \{\vc{x} \in \mathbb{R}^d: \|\vc{x}\| \leq \|\vc{x} - \vc{v}\|\}, \\
%\overline{\cH_{\vc{v}}} &= \mathbb{R}^d \setminus \cH_{\vc{v}}, \\
\text{(convex polytope)} \qquad \ \ \ \cV_L &:= \bigcap_{\vc{v} \in L} \cH_{\vc{v}}, \qquad (\vc{0} \notin L) \\
%\overline{\cV_L} &= \mathbb{R}^d \setminus \cV_L = \bigcup_{\vc{v} \in L} \overline{\cH_{\vc{v}}}, \\
\text{(spherical cap)} \qquad \ \ \ \cC_{\vc{v}} &:= \overline{\cH_{\vc{v}}} \cap \cB, \\
%\overline{\cC_{\vc{v}}} &= \cB \setminus \cC_{\vc{v}} = \cH_{\vc{v}} \cap \cB, \\
\text{(Voronoi cell)} \qquad \quad \ \cV &:= \cV_{\cL \setminus \{\vc{0}\}}.
\end{align}
We further define the complements $\overline{\cH_{\vc{v}}} := \mathbb{R}^d \setminus \cH_{\vc{v}}$ and $\overline{\cV_L} := \mathbb{R}^d \setminus \cV_L$ % = \bigcup_{\vc{v} \in L} \overline{\cH_{\vc{v}}}$
in $\mathbb{R}^d$, and $\overline{\cC_{\vc{v}}} := \cB \setminus \cC_{\vc{v}}$ % = \cH_{\vc{v}} \cap \cB$ 
on the ball. Note that the definition of a polytope $\cV_L$ is generic, and the list $L$ need not be from a lattice. $\cV_L$ may further be unbounded (and its volume may be infinite), although for sufficiently large, randomly chosen lists $L$ it will usually be finite. For $L \subset \cL \setminus \{\vc{0}\}$, the polytope $\cV_L$ defines an \textit{approximate Voronoi cell} of the lattice $\cL$~\cite{doulgerakis19}, satisfying $\cV \subseteq \cV_L$ with equality iff $\cR \subseteq L$, where $\cR$ is the set of \textit{relevant vectors} of the lattice~\cite{micciancio10}. 

To analyze volumes of intersections on the ball, we will use the following asymptotic formula~\cite[Equation (28)]{shannon59}, where $\alpha = \frac{1}{2} \|\vc{v}\| \in (0,1)$:
%
%\begin{lemma}[Spherical caps]
%Let $\vc{v} \in \mathbb{R}^d$ with $\alpha = \frac{1}{2} \|\vc{v}\| \in (0,1)$. Then:
\begin{align}
C(\alpha) := \frac{\vol(\cC_{\vc{v}})}{\vol(\cB)} \sim \sqrt{\frac{1 - \alpha^2}{2 \pi \alpha^2 d}} \cdot (1 - \alpha^2)^{d/2}. \qquad \quad (d \to \infty) \label{eq:cap}
%= (1 - \alpha^2)^{d/2 + o(d)}. 
\end{align}
%\end{lemma}
%
For constant $\alpha \in (0, 1)$ and large $d$, Equation~\eqref{eq:cap} can alternatively be written as $C(\alpha) = O((1 - \alpha^2)^{d/2} / \sqrt{d}) = (1 - \alpha^2)^{d/2 + o(d)}$. 

Finally, the \textit{Gaussian heuristic} states that for sufficiently smooth and random regions $\cK \subset \mathbb{R}^d$, the number of lattice points inside $\cK$ scales as $\vol(\cK) / \vol(\cV)$.

%################################ PAGE 6 ###############################
\clearpage
%################################ PAGE 7 ###############################

%WWWWWWWWWWWWWWWWWWWWWWWWWWWWWWWWWWWWWWWWWWWWWWWWWWWWWWWW
%WWWWWWWWWWWWWWWWWWWWWWWWWWWWWWWWWWWWWWWWWWWWWWWWWWWWWWWW

\section{Volumes of random polytopes}
\label{sec:polytopes}

To study the asymptotic behavior of volumes of approximate Voronoi cells, we will first study the more fundamental problem of estimating the volume of polytopes $\cV_L$ defined as the intersection of a large number of random half-spaces. We will study two specific cases for the list $L$ below:
\begin{enumerate}
\item[1.] Uniformly random points from the (unit) sphere;
\item[2.] Uniformly random points from the (unit) ball.
\end{enumerate}
The volume of such random polytopes has been previously studied in e.g.~\cite{shcherbina00, talagrand00, pivovarov07, turchi19, pivovarov10}, and in particular the case of points from the sphere was analyzed in~\cite{pivovarov07}. For the application to approximate Voronoi cells we need bounds for the case when points are drawn uniformly at random from a ball, which to the best of our knowledge has not been explicitly studied before. For completeness, and to illustrate how the analysis changes between the case of the sphere and the ball, we treat the case of random points from the unit sphere here as well. 

\subsection{Uniformly random points from the (unit) sphere}

First, let us study the case where $L$ is sampled uniformly at random from the unit sphere $\cS$. This setting was previously studied in \cite[Section 3.2]{pivovarov07}, but for extending the analysis to the case of the unit ball we explicitly analyze this problem here as well. Note that for $L \subseteq \cS^{d-1}$ we have the trivial lower bound $\vol(\cV_L) \geq 2^{-d} \vol(\cB)$, as $\frac{1}{2} \cB \subseteq \cV_L$. For a slightly less trivial upper bound, note that the polytope $\cV_L$ is unbounded iff all points in $L$ lie in a certain hemisphere. The probability that this happens was computed by Wendel~\cite{wendel62} as:
\begin{align}
\Pr_{L \sim \cS}\Big(\vol(\cV_L) < \infty\Big) = 1 - 2^{-n+1} \sum_{k=0}^{d-1} \binom{n-1}{k}.
\end{align}
In particular, it is extremely unlikely that for lists of size $n = \omega(d)$, the corresponding polytopes are unbounded. For lists of exponential size, we obtain the following result, similar to~\cite[Theorem 3.9]{pivovarov07}.

\begin{theorem}[Random points from the sphere] \label{thm:sphere}
Let $\alpha > 1$, and let $L \subset \cS$ consist of $n = \alpha^d$ uniformly random vectors from $\cS$. Then, with probability $1 - o(1)$ over the randomness of $L$, we have:
\begin{align}
\vol(\cV_L) = \left(\frac{\alpha^2}{4 \alpha^2 - 4}\right)^{d/2 + o(d)} \vol(\cB).
\end{align}
\end{theorem}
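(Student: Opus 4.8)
The plan is to estimate $\vol(\cV_L)$ by working radially: since each half-space $\cH_{\vc{v}}$ for $\vc{v} \in \cS$ is determined by the condition $\langle \vc{x}, \vc{v}\rangle \le \tfrac{1}{2}\|\vc{v}\|^2 = \tfrac{1}{2}$, a point $\vc{x} = r\vc{u}$ (with $\vc{u} \in \cS$, $r \ge 0$) lies in $\cV_L$ iff $r\langle \vc{u}, \vc{v}\rangle \le \tfrac12$ for all $\vc{v} \in L$. So for a fixed direction $\vc{u}$, the ray from the origin stays in $\cV_L$ up to radius $R(\vc{u}) = 1/(2\max_{\vc{v}\in L}\langle \vc{u},\vc{v}\rangle_+)$ (interpreting this as $+\infty$ if all inner products are $\le 0$, which by the Wendel bound happens with probability $o(1)$ and contributes negligibly once we argue $\vol(\cV_L)$ is finite whp). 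Hence $\vol(\cV_L) = \tfrac{1}{d}\int_{\cS} R(\vc{u})^d \, d\vc{u} = \vol(\cB)\cdot \expn_{\vc{u}\sim\cS}[R(\vc{u})^d]$, and the whole problem reduces to computing $\expn_{\vc{u}}[(2\max_{\vc{v}\in L}\langle \vc{u},\vc{v}\rangle_+)^{-d}]$ up to subexponential factors.

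The next step is to understand the distribution of $M(\vc{u}) := \max_{\vc{v}\in L}\langle \vc{u},\vc{v}\rangle$ for fixed $\vc{u}$, over the random $L$. For a single uniform $\vc{v}\in\cS$, $\Pr(\langle \vc{u},\vc{v}\rangle \ge \beta)$ equals the relative cap measure, which by the spherical analogue of \eqref{eq:cap} is $(1-\beta^2)^{d/2 + o(d)}$. With $n = \alpha^d$ independent samples, the maximum concentrates (up to $1+o(1)$ factors in the relevant base) around the value $\beta^*$ where $n\cdot(1-\beta^2)^{d/2} \asymp 1$, i.e. $\beta^* = \sqrt{1 - \alpha^{-2}}$: for $\beta$ a constant below $\beta^*$ the max exceeds $\beta$ whp, and for $\beta$ above it the max is below $\beta$ whp. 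This is a standard first-moment/second-moment (or union bound plus independence) argument, and it shows $M(\vc{u}) = \sqrt{1-\alpha^{-2}}\cdot(1+o(1))$ with probability $1 - o(1)$, uniformly enough that we can pull it out. Therefore, for a typical direction, $R(\vc{u})^d = (2M(\vc{u}))^{-d} = \big(4(1-\alpha^{-2})\big)^{-d/2+o(d)} = \big(\tfrac{\alpha^2}{4\alpha^2-4}\big)^{d/2+o(d)}$, which is exactly the claimed bound once multiplied by $\vol(\cB)$.

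The main obstacle — and the step that needs genuine care rather than hand-waving — is interchanging the expectation over $\vc{u}$ with the high-probability estimate on $M(\vc{u})$, i.e. justifying that the contribution of atypical directions (where $M(\vc{u})$ is much smaller than $\beta^*$, making $R(\vc{u})$ much larger, possibly infinite) does not blow up the integral $\expn_{\vc{u}}[R(\vc{u})^d]$. For the lower bound this is not an issue (Jensen/Fatou or just restricting to the typical set gives $\vol(\cV_L) \ge (\tfrac{\alpha^2}{4\alpha^2-4})^{d/2+o(d)}\vol(\cB)$ whp). For the upper bound I would fix $L$ first (conditioning on a high-probability good event for $L$) and then control $\Pr_{\vc{u}}(M(\vc{u}) < \beta)$ for $\beta < \beta^*$: show that for such $L$, only an exponentially small fraction of directions $\vc{u}$ have $M(\vc{u})$ small, with the measure of $\{M(\vc{u}) < \beta\}$ decaying fast enough in $\beta$ that $\int_{\cS} (2M(\vc{u}))^{-d}d\vc{u}$ is dominated by the bulk near $\beta^*$ — a Laplace-type balancing of the cap-measure decay against the $R^d$ growth. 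One should also separately check that $\{M(\vc{u}) \le 0\}$ has measure $o(1)$-in-exponent-irrelevant size (it's exponentially small for $n$ exponential), so the unbounded rays are harmless. Assembling these pieces — radial formula, concentration of the max, and the Laplace-type tail control for the upper bound — yields the theorem; the transition point and the absence of a regime change (unlike the ball case, here there is a single clean answer for all $\alpha > 1$) falls out because on the sphere the "useful" radius is always strictly less than $1/(2\beta^*)$ and never clipped by the ambient ball.
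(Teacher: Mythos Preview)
Your approach is correct and identifies the same critical radius $r_0 = \alpha/(2\sqrt{\alpha^2-1})$, but the route differs from the paper's. The paper does not write the radial formula $\vol(\cV_L) = \vol(\cB)\,\expn_{\vc{u}}[R(\vc{u})^d]$ nor analyze the random variable $M(\vc{u}) = \max_{\vc{v}\in L}\langle\vc{u},\vc{v}\rangle$ at all. Instead it works with the truncation $\cV_L^{(r)} = \cV_L \cap r\cB$, rewrites this as $r\bigl(\cB \setminus \bigcup_{\vc{v}\in L}\cC_{\vc{v}/r}\bigr)$, and argues geometrically: for the lower bound a one-line union bound on cap volumes gives $\vol(\cV_L^{(r_0)}) \ge (1-o(1))\vol(r_0\cB)$; for the upper bound it takes $r$ slightly above $r_0$ (so that $n \approx 2d^2\cdot\vol(\cB)/\vol(\cC_{\vc{v}/r})$), computes that the expected uncovered fraction of the sphere is at most $e^{-2d^2}$, and applies Markov's inequality to conclude that with probability $1-e^{-d^2}$ at most an $e^{-d^2}$ fraction of $r\cS$ survives in $\cV_L$, hence $\vol(\cV_L\setminus r\cB)$ is negligible.

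Your reduction to the concentration of a maximum of $n$ i.i.d.\ inner products is more probabilistic and makes the one-dimensional structure explicit; it would extend more naturally if one wanted sharper error terms. The paper's cap-union picture is more geometric and gives the lower bound essentially for free. The step you correctly single out as the ``main obstacle'' --- controlling the contribution of directions with atypically small $M(\vc{u})$ --- is precisely what the paper handles via Markov on the uncovered surface fraction, rather than by a Laplace-type integral as you propose; either works. One caveat worth noting for your route: $\expn_L[\vol(\cV_L)] = \infty$ strictly, since $\Pr(M(\vc{u})\le 0)>0$, so a naive expectation-then-Markov argument on the full volume fails and you must truncate first (as the paper implicitly does by working inside $r\cB$); neither proof makes this step fully airtight.
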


\begin{proof}
To prove Theorem~\ref{thm:sphere}, we will prove the following, equivalent statement:
\begin{align}
\vol(\cV_L) = \vol(r_0 \cB)^{1 + o(1)}, \qquad r_0 = \sqrt{\frac{\alpha^2}{4\alpha^2 - 4}} \, .
\end{align}
Note that $\vol(r \cB) = r^d \vol(\cB)$ for arbitrary $r$, hence the equivalence. Below we will further use the quantity $\cV_L^{(r)} = \cV_L \cap \, r \cB \subseteq \cV_L$ as the intersection of the polytope with the ball of radius $r > 0$. Observe that for sufficiently small $r \ll r_0$ we have $\cV_L^{(r)} = r \cB \subset \cV_L$ while for large $r \gg r_0$ we have $\cV_L^{(r)} = \cV_L \subset \, r \cB$. The quantity $r_0$ is intuitively the radius $r$ for which $\vol(\cV_L^{(r)}) \approx \vol(\cV_L) \approx \vol(r \cB)$.

First, some simple manipulations give:
\begin{align}
\cV_L^{(r)} = \bigcap_{\vc{v} \in L} \cH_{\vc{v}} \cap (r \cB) = \bigcap_{\vc{v} \in L} \left(r \cB \setminus r \cC_{\vc{v}/r}\right) = r \underbrace{\Big(\cB \setminus \bigcup_{\vc{v} \in L} \cC_{\vc{v}/r}\Big)}_{\cK}. \label{eq:vl}
\end{align} 
Note that the vectors $\vc{v}/r$ all have norm $1/r$, and the spherical caps $\cC_{\vc{v}/r}$ thus have a fixed base radius of $1/(2r)$. To prove the lower bound on $\vol(\cV_L)$, we will use elementary volume arguments to argue that $\vol(\cK) \approx \vol(\cB)$. For the upper bound, we have $\vol(\cK) \leq \vol(\cB)$, and we will argue that with high probability over the randomness of $L$, $\vol(\cV_L) \approx \vol(\cV_L^{(r)})$.

\textbf{Lower bound $(\geq)$}: Ignoring spherical cap intersections, we have:
\begin{align}
\vol(\cK) %= \vol\left(\cB \setminus \bigcup_{\vc{v} \in L} \cC_{\vc{v}/r}\right) 
\geq \vol(\cB) - n \cdot \vol(\cC_{\vc{v}/r}) = \vol(\cB) \left[1 - \alpha^d \left(1 - \tfrac{1}{4 r^2}\right)^{d/2 + o(d)}\right].
\end{align}
For $1/\alpha^2 = 1 - 1/(4r^2) + o(1)$, or equivalently $r = r_0 - o(1)$, we thus get $\vol(\cK) \geq (1 - o(1)) \cdot \vol(\cB)$. 

\textbf{Upper bound $(\leq)$}: Clearly $\vol(\cV_L^{(r)}) = r^d \vol(\cK) \leq r^d \vol(\cB)$; the difficulty lies in showing that $\vol(\cV_L) \approx \vol(\cV_L^{(r)})$. Note that when $n$ is large, then the spherical caps in \eqref{eq:vl} will cover (almost) the entire surface of $\cB$ -- if e.g.\ only a fraction $2^{-\Theta(d^2)}$ of the sphere remains uncovered, then the parts of $\cV_L$ extending beyond $r \cB$ will contribute a negligible amount to the volume of $\cV_L$.

Given a point on $\cS$, the probability of it not being covered by one of $n$ spherical caps $\cC_{\vc{v}/r}$ is given by $[1 - \vol(\cC_{\vc{v}/r}) / \vol(\cB)]^n$. For $n = \vol(\cB) / \vol(\cC_{\vc{v}/r})$, this can be upper bounded by $1/e$, hence for $n = 2 d^2 \vol(\cB) / \vol(\cC_{\vc{v}/r})$ the expected quantity not covered on the sphere is at most $e^{-2 d^2}$. By Markov's inequality, the probability that more than a fraction $e^{-d^2}$ of the sphere is covered is at most $e^{-2d^2 + d^2} = e^{-d^2}$, and so the upper bound follows.
\end{proof}

%In the extreme cases of Theorem~\ref{thm:sphere}, note that for small $\alpha = 1 + \eps$, we have $\vol(\cV_L) = (\frac{1}{\sqrt{8 \eps}} + O(\sqrt{\eps}))^{d + o(d)} \vol(\cB)$ which, as expected, diverges as $\eps \to 0$. For large $\alpha \to \infty$, we have $\vol(\cV_L) \to \vol(\cB) / 2^d$, matching the trivial lower bound obtained via $\frac{1}{2} \cB \subseteq \cV_L$.

\subsection{Uniformly random points from the (unit) ball}

As sampling from $\cB$ and $\cS$ is similar in high-dimensional spaces (almost all the volume of the ball is concentrated near the surface of the sphere), in most cases the asymptotics for the unit sphere and the unit ball are the same. However, when $n$ is very large, a significant number of vectors will have norm significantly less than $1$, and these will then determine the shape of the resulting polytope.

The following main result shows that if $n \gg 2^{d/2}$, then the volume of the Voronoi cell for $\vc{0}$ scales like $\vol(\cB)/n$. Note that $\cV_L$ can be seen as the Voronoi cell for $\vc{0}$ in the data set $L \cup \{\vc{0}\}$, and for $n \gg 2^{d/2}$ the Voronoi cell of the $\vc{0}$-vector is therefore no larger than the Voronoi cells of the other $n$ points in the ball -- each of the points covers an equal fraction $\vol(\cB)/n$ of the ball. For small $n$, the portion of the ball covered by $\vc{0}$ is an exponential factor larger than the average.

\begin{theorem}[Random points from the unit ball] \label{thm:ball}
Let $\alpha > 1$, and let $L \subset \cB$ consist of $n = \alpha^d$ uniformly random vectors from $\cB$. Then, with probability $1 - o(1)$ over the randomness of $L$, we have:
\begin{align}
\alpha \leq \sqrt{2} \quad \implies \quad \vol(\cV_L) &= \left(\frac{\alpha^2}{4 \alpha^2 - 4}\right)^{d/2 + o(d)} \vol(\cB); \\
\alpha \geq \sqrt{2} \quad \implies \quad \vol(\cV_L) &= \left(\frac{1}{\alpha^2}\right)^{d/2 + o(d)} \vol(\cB).
\end{align}
\end{theorem}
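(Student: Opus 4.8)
The plan is to follow the proof of Theorem~\ref{thm:sphere} almost verbatim, the only difference being that a vector $\vc v$ drawn uniformly from $\cB$ no longer has norm exactly $1$. As there, we prove the equivalent statement $\vol(\cV_L) = \vol(r_0\cB)^{1+o(1)}$, now with
\begin{align}
r_0 = \begin{cases} \dfrac{\alpha}{2\sqrt{\alpha^2-1}} = \sqrt{\dfrac{\alpha^2}{4\alpha^2-4}}\,, & 1 < \alpha \leq \sqrt{2},\\[2mm] \dfrac{1}{\alpha}\,, & \alpha \geq \sqrt{2}, \end{cases}
\end{align}
which is continuous at $\alpha = \sqrt{2}$, where both branches give $1/\sqrt{2}$. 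The one genuinely new ingredient is the following. If $\vc v \in \cB$ has norm $\rho$, then $\cC_{\vc v/r}$ is empty for $\rho \geq 2r$ and otherwise has relative volume $(1 - \rho^2/(4r^2))^{d/2+o(d)}$ by~\eqref{eq:cap}; since $\pr_{\vc v \sim \cB}(\|\vc v\| \leq \rho) = \rho^d$, a routine Laplace-type estimate of the integral over $\rho$ gives, for any fixed $\vc u \in \cS$,
\begin{align}
\pr_{\vc v \sim \cB}\big(\vc u \in \cC_{\vc v/r}\big) \;=\; \expn_{\vc v \sim \cB}\!\left[\frac{\vol(\cC_{\vc v/r})}{\vol(\cB)}\right] \;=\; q(r)^{d/2+o(d)}, \qquad q(r) := \max_{0 \leq \rho \leq \min(1,\,2r)} \rho^2\Big(1 - \frac{\rho^2}{4r^2}\Big).
\end{align}
A one-line optimisation shows $q(r) = r^2$ for $r \leq 1/\sqrt{2}$ (maximiser the interior point $\rho = r\sqrt{2}$) and $q(r) = 1 - 1/(4r^2)$ for $r \geq 1/\sqrt{2}$ (maximiser the endpoint $\rho = 1$). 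Hence $\alpha^2 q(r)$ is continuous and strictly increasing in $r$, equals $1$ exactly at $r = r_0$, and already reaches $1$ on the branch $r \leq 1/\sqrt{2}$ iff $\alpha^2 \cdot \tfrac12 \geq 1$ --- which is exactly why the formula for $r_0$ changes at $\alpha = \sqrt{2}$.

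For the \emph{lower bound}, write $\cV_L^{(r)} := \cV_L \cap r\cB = r\big(\cB \setminus \bigcup_{\vc v \in L}\cC_{\vc v/r}\big)$ as in~\eqref{eq:vl}. Fixing any $r < r_0$, the union bound over the list gives $\expn_L\big[\vol(r\cB) - \vol(\cV_L^{(r)})\big] \leq r^d \cdot n \cdot \expn_{\vc v \sim \cB}[\vol(\cC_{\vc v/r})] = r^d\vol(\cB)\,(\alpha^2 q(r))^{d/2+o(d)}$, which is $o(r^d\vol(\cB))$ since $\alpha^2 q(r) < 1$. By Markov's inequality, with probability $1 - o(1)$ we then have $\vol(\cV_L) \geq \vol(\cV_L^{(r)}) \geq (1 - o(1))\,r^d\vol(\cB)$, and letting $r \uparrow r_0$ gives $\vol(\cV_L) \geq r_0^{d+o(d)}\vol(\cB)$.

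For the \emph{upper bound}, fix any $r > r_0$, so $\alpha^2 q(r) = 1 + \Theta(1)$ and hence $n \cdot \pr_{\vc v \sim \cB}(\vc u \in \cC_{\vc v/r}) = (\alpha^2 q(r))^{d/2+o(d)}$ is exponentially large. As in Theorem~\ref{thm:sphere}, the expected fraction of $\cS$ left uncovered by $\bigcup_{\vc v\in L}\cC_{\vc v/r}$ is then at most $\exp(-2^{\Theta(d)})$, so by Markov this fraction is $\exp(-2^{\Theta(d)})$ with probability $1-o(1)$. If $\vc u \in \cS$ is covered, then scaling shows $r\vc u \notin \cV_L$, and convexity of $\cV_L$ (which contains $\vc 0$) then forces $\cV_L$ to reach radius at most $r$ along $\vc u$; hence $\cV_L \setminus r\cB$ lies inside the exponentially thin uncovered cone. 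Combining this with the fact --- established exactly as for the sphere, and needed there too --- that with probability $1-o(1)$ the polytope $\cV_L$ is bounded with circumradius $\poly(d)$ (a standard random-polytope estimate: whp every $\vc u \in \cS$ admits some $\vc v \in L$ with $\langle \vc u, \vc v\rangle \geq 1/\poly(d)$, which forces $\cV_L \subseteq \poly(d)\cdot\cB$), we get $\vol(\cV_L \setminus r\cB) \leq \poly(d)^d \cdot \exp(-2^{\Theta(d)}) \cdot \vol(\cB) = o(\vol(\cB))$. Therefore $\vol(\cV_L) \leq \vol(\cV_L^{(r)}) + \vol(\cV_L\setminus r\cB) \leq (1+o(1))\,r^d\vol(\cB)$, and letting $r \downarrow r_0$ matches the lower bound; intersecting the $1-o(1)$ events for one $r$ slightly below and one slightly above $r_0$ and absorbing $r - r_0$ into the $o(d)$ exponent completes the argument.

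I expect the only non-routine step to be, exactly as in the sphere case, converting ``all but an exponentially small fraction of $\cS$ is covered'' into ``$\vol(\cV_L \setminus r\cB)$ is negligible'': this is why one needs a $\poly(d)$ bound on the circumradius of $\cV_L$ and not merely its boundedness. Everything else reduces to the scalar optimisation of $\rho \mapsto \rho^2(1 - \rho^2/(4r^2))$ and standard Chernoff/Markov concentration over the random list $L$.
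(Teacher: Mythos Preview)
Your argument is correct, and it takes a somewhat different route from the paper's. The paper reduces the ball case to the sphere case by decomposing $L$ into thin spherical shells $L_i = \{\vc v \in L : \gamma^i \leq \|\vc v\| \leq \gamma^{i-1}\}$, applies Theorem~\ref{thm:sphere} to each $\cV_{L_i}$, and then optimises over the shell index to find the one giving the smallest ``ball'' (minimising $f(\beta) = \beta^2\alpha^2/(4\beta\alpha^2-4)$ over $\beta = \gamma^{2i} \in (0,1]$, with minimiser $\beta = 2/\alpha^2$ when $\alpha \geq \sqrt 2$ and $\beta = 1$ otherwise). You instead rerun the sphere proof directly, absorbing the random norm $\rho = \|\vc v\|$ into a Laplace integral and optimising $\rho \mapsto \rho^2(1-\rho^2/(4r^2))$; your optimiser $\rho^* = r\sqrt 2$ corresponds exactly to the paper's $\beta^* = 2/\alpha^2$ once one sets $r = r_0$, so the two optimisations are the same computation in different coordinates. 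The paper's packaging is shorter given Theorem~\ref{thm:sphere} as a black box, while yours is more self-contained and makes the critical radius $\rho^*$ appear naturally. You also spell out the step ``covered sphere $\Rightarrow$ negligible volume outside $r\cB$'' via a $\poly(d)$ circumradius bound, which the paper leaves implicit in both Theorems~\ref{thm:sphere} and~\ref{thm:ball}; this is a genuine detail that needs to be there, and your treatment of it is fine.
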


\begin{proof}
For $\gamma < 1$ close to $1$, let us divide the set $L$ into sets $L_i = \{\vc{v} \in L: \gamma^i \leq \|\vc{v}\| \leq \gamma^{i-1}\}$, for $i = 0, 1, \dots$, i.e.\ we partition $L_i$ according to a sequence of thin spherical shells. With high probability over the randomness of $L$, each of these lists $L_i$ will contain $(\gamma^i \alpha)^{d + o(d)}$ vectors. The original polytope can now equivalently be described as $\cV_L = \bigcap_{i=0}^{\infty} \cV_{L_i}$. To estimate the volume of $\cV_L$, note that by Theorem~\ref{thm:sphere}, each of these cells $\cV_{L_i}$ is roughly shaped like a ball of a certain radius $r_i$. As a result, the volume of $\cV_L$ is determined by the smallest radius $\min_{i \in \mathbb{N}} r_i$ of these balls, corresponding to one of the lists $L_i$. 

To find the list $L_i$ defining the smallest polytope, recall that by applying Theorem~\ref{thm:sphere} with $n_i = (\gamma^i \alpha)^{d + o(d)}$ vectors to a sphere of radius $\gamma^i$, we have the following relation, where $\beta = \gamma^{2i}$:
\begin{align}
\vol(\cV_{L_i}) = \left(\tfrac{\alpha^2 \gamma^{2i}}{4 \alpha^2 \gamma^{2i} - 4}\right)^{d/2 + o(d)} \vol(\gamma^i \cB) = \Big(\underbrace{\tfrac{\beta^2 \alpha^2}{4 \beta \alpha^2 - 4}}_{f(\beta)}\Big)^{d/2 + o(d)} \vol(\cB).
\end{align}
To find the value $\beta$ resulting in the smallest radius, note that the derivative of $f(\beta)$ satisfies $f'(\beta) = -\beta \alpha^2 (2 - \beta \alpha^2) / 4 (\beta \alpha^2 - 1)^2$, which is negative for small $\beta < 2/\alpha^2$, i.e.\ $f(\beta)$ is decreasing with $\beta$, and the volume of the $\cV_{L_i}$ increases with $i$. Now $f'(\beta) = 0$ has one solution at $\beta = 2 / \alpha^2$, which is attained by one of the lists $L_i$ iff $\alpha \geq \sqrt{2}$. In the regime $\alpha < \sqrt{2}$, the smallest radius is obtained for the first list $L_0$, resulting in the same bound as in Theorem~\ref{thm:sphere}, while for $\alpha \geq \sqrt{2}$ the non-trivial minimum value lies at $\beta = \gamma^{2i} = 2 / \alpha^2$, resulting in $f(\beta) = 1/\alpha^2$ and $\vol(\cV_L) = \alpha^{-d + o(d)} \vol(\cB)$. \qedhere
\end{proof}

%Unlike for points on the sphere, here also the lower bound only holds with high probability over the randomness of $L$ -- in rare occasions, the list $L$ may contain many vectors of small Euclidean norm, in which case the volume of $\cV_L$ may be much smaller than expected.

Let us finally state separately what happens when we draw points uniformly at random from a ball of a different radius. This directly follows from Theorem~\ref{thm:ball}.

\begin{corollary}[Random points from the $\beta$-ball] \label{thm:ball2}
Let $\alpha > 1$, and let $L \subset \cB$ consist of $n = \alpha^d$ uniformly random vectors from $\beta \cdot \cB$. Then, with probability $1 - o(1)$ over the randomness of $L$, we have:
\begin{align}
\alpha \leq \sqrt{2} \quad \implies \quad \vol(\cV_L) &= \left(\frac{\alpha^2 \beta^2}{4 \alpha^2 - 4}\right)^{d/2 + o(d)} \vol(\cB); \\
\alpha \geq \sqrt{2} \quad \implies \quad \vol(\cV_L) &= \left(\frac{\beta^2}{\alpha^2}\right)^{d/2 + o(d)} \vol(\cB).
\end{align}
\end{corollary}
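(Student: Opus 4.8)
The plan is to deduce this directly from Theorem~\ref{thm:ball} by a homogeneity (rescaling) argument, so that no new probabilistic analysis is required. The starting observation is that the half-space construction is positively homogeneous: for any scalar $\beta > 0$ and any $\vc{w} \in \mathbb{R}^d$, we have $\vc{x} \in \cH_{\beta \vc{w}}$ iff $\|\vc{x}\| \le \|\vc{x} - \beta\vc{w}\|$, which upon dividing through by $\beta$ is equivalent to $\vc{x}/\beta \in \cH_{\vc{w}}$; hence $\cH_{\beta\vc{w}} = \beta \, \cH_{\vc{w}}$. Taking intersections, $\cV_L = \bigcap_{\vc{v}\in L}\cH_{\vc{v}} = \beta \bigcap_{\vc{w}\in L/\beta}\cH_{\vc{w}} = \beta \, \cV_{L/\beta}$, where $L/\beta := \{\vc{v}/\beta : \vc{v}\in L\}$.

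First I would note that if $L$ consists of $n = \alpha^d$ uniformly random points from $\beta\cB$, then $L/\beta$ consists of exactly $n = \alpha^d$ uniformly random points from $\cB$, so Theorem~\ref{thm:ball} applies verbatim to $\cV_{L/\beta}$ and holds with probability $1 - o(1)$ over the same randomness. Combining this with the identity $\vol(\cV_L) = \vol(\beta\,\cV_{L/\beta}) = \beta^d \vol(\cV_{L/\beta})$ gives, in the regime $\alpha \le \sqrt{2}$,
\begin{align}
\vol(\cV_L) = \beta^d \left(\tfrac{\alpha^2}{4\alpha^2 - 4}\right)^{d/2 + o(d)}\vol(\cB) = \left(\tfrac{\alpha^2\beta^2}{4\alpha^2 - 4}\right)^{d/2 + o(d)}\vol(\cB),
\end{align}
and in the regime $\alpha \ge \sqrt{2}$, $\vol(\cV_L) = \beta^d \left(\alpha^{-2}\right)^{d/2+o(d)}\vol(\cB) = \left(\beta^2/\alpha^2\right)^{d/2+o(d)}\vol(\cB)$, as claimed. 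The only bookkeeping point is that, since $\beta$ is a fixed constant, $\beta^d = (\beta^2)^{d/2}$ can be merged into the base of the exponent without affecting the $o(d)$ term, because $(\beta^2)^{o(d)} = 2^{o(d)}$.

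The argument has essentially no hard step; the one thing to be mildly careful about is the transfer of the ``$1 - o(1)$'' event, but since the map $L \mapsto L/\beta$ is a measure-preserving bijection between the relevant sample spaces, the failure probability is identical to that in Theorem~\ref{thm:ball} and the same high-probability guarantee carries over directly. (One could alternatively reprove the corollary from scratch by rerunning the shell-decomposition argument of Theorem~\ref{thm:ball} with $\gamma^i$ replaced by $\beta\gamma^i$, but the rescaling route is shorter and makes the dependence on $\beta$ transparent.)
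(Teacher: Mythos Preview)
Your proof is correct and follows essentially the same rescaling approach as the paper: the paper's one-line argument (``relative to the $\beta$-ball, $\vol(\cV_L) = r^{d+o(d)}\vol(\beta\cB)$, then use $\vol(\beta\cB) = \beta^d\vol(\cB)$'') is exactly your homogeneity computation $\cV_L = \beta\,\cV_{L/\beta}$ applied to Theorem~\ref{thm:ball}, just stated more tersely. Your version has the virtue of making the rescaling and the transfer of the high-probability event fully explicit.
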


\begin{proof}
Relative to the $\beta$-ball, we have $\vol(\cV_L) = r^{d + o(d)} \vol(\beta \cB)$ with $r$ as in Theorem~\ref{thm:ball}. Noting that $\vol(\beta \cB) = \beta^d \vol(\cB)$, the result follows. \qedhere
\end{proof}

%In the extreme cases of Theorem~\ref{thm:sphere}, note that for small $\alpha = 1 + \eps$, we have $\vol(\cV_L) = (\frac{1}{\sqrt{8 \eps}} + O(\sqrt{\eps}))^{d + o(d)} \vol(\cB)$ which, as expected, diverges as $\eps \to 0$. For large $\alpha \to \infty$, we have $\vol(\cV_L) \to \vol(\cB) / 2^d$, matching the trivial lower bound obtained via $\frac{1}{2} \cB \subseteq \cV_L$.

%WWWWWWWWWWWWWWWWWWWWWWWWWWWWWWWWWWWWWWWWWWWWWWWWWWWWWWWW
%WWWWWWWWWWWWWWWWWWWWWWWWWWWWWWWWWWWWWWWWWWWWWWWWWWWWWWWW

\section{Approximate Voronoi cells, revisited}
\label{sec:cvpp}

With the results from Section~\ref{sec:polytopes}, we can immediately deduce asymptotics for the volume of approximate Voronoi cells, where these results can now be derived using only the Gaussian heuristic, which has been used and verified on far more occasions than \cite[Heuristic 1]{doulgerakis19}.\footnote{By applying the Gaussian heuristic to balls of different radii, we can derive the density of norms of lattice vectors, while spherical symmetry of the distribution of lattice vectors then implies that the lattice vectors inside a ball must follow a uniform distribution.} 

\begin{corollary}[Points from a lattice] \label{cor:lattice2}
Let $\alpha > 1$, and let $L \subset \cL \setminus \{\vc{0}\}$ consist of the $\alpha^d$ shortest non-zero vectors of a lattice $\cL$. Then, assuming the Gaussian heuristic holds, with probability $1 - o(1)$ we have:
\begin{align}
\alpha \leq \sqrt{2} \quad \implies \quad \vol(\cV_L) &= \left(\frac{\alpha^4}{4 \alpha^2 - 4}\right)^{d/2 + o(d)} \vol(\cV); \\
\alpha \geq \sqrt{2} \quad \implies \quad \vol(\cV_L) &= (1 + o(1))^{d/2 + o(d)} \vol(\cV).
\end{align}
\end{corollary}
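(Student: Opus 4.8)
The plan is to deduce this from Corollary~\ref{thm:ball2} by invoking the Gaussian heuristic to model the list of shortest lattice vectors as a list of uniformly random points in a ball of the appropriate radius. Since scaling the lattice by a factor $c$ scales both $\vol(\cV_L)$ and $\vol(\cV)$ by $c^d$, the ratio $\vol(\cV_L)/\vol(\cV)$ is scale-invariant, and I would first rescale $\cL$ so that $\vol(\cV) = \vol(\cB)$; equivalently, the Gaussian heuristic constant of $\cL$ is $1$, and the two claimed identities become $\vol(\cV_L) = (\alpha^4/(4\alpha^2-4))^{d/2+o(d)}\vol(\cB)$ and $\vol(\cV_L) = (1+o(1))^{d/2+o(d)}\vol(\cB)$.

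Next I would pin down the ball that contains $L$. Under this normalization the Gaussian heuristic states that the number of non-zero lattice vectors of norm at most $r$ equals $\vol(r\cB)/\vol(\cV) = r^d$ up to a $2^{o(d)}$ factor, so solving $r^d = \alpha^d$ shows that, with probability $1 - o(1)$, the list $L$ of the $\alpha^d$ shortest non-zero lattice vectors coincides with the set of all non-zero lattice vectors inside the ball $\beta\cB$ of radius $\beta = \alpha(1 + o(1))$. By the spherical symmetry of the lattice-vector distribution combined with the Gaussian heuristic --- as recorded in the footnote at the start of this section --- these $\alpha^d$ points inside $\beta\cB$ may, for the purpose of estimating $\vol(\cV_L)$, be treated as $n = \alpha^d$ uniformly random points drawn from $\beta\cB$. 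Applying Corollary~\ref{thm:ball2} with list-size parameter $\alpha$ and ball radius $\beta = \alpha$ (the $1 + o(1)$ factor is absorbed into the $o(d)$ term) and using $\vol(\cB) = \vol(\cV)$ then gives
\begin{align}
\alpha \leq \sqrt{2} \quad \implies \quad \vol(\cV_L) &= \left(\frac{\alpha^2 \beta^2}{4\alpha^2 - 4}\right)^{d/2 + o(d)} \vol(\cB) = \left(\frac{\alpha^4}{4\alpha^2 - 4}\right)^{d/2 + o(d)} \vol(\cV), \\
\alpha \geq \sqrt{2} \quad \implies \quad \vol(\cV_L) &= \left(\frac{\beta^2}{\alpha^2}\right)^{d/2 + o(d)} \vol(\cB) = (1 + o(1))^{d/2 + o(d)} \vol(\cV).
\end{align}

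The step I expect to be the main obstacle is transferring the probabilistic ingredients behind Theorems~\ref{thm:sphere} and~\ref{thm:ball}, and hence behind Corollary~\ref{thm:ball2}, from genuinely independent random points to the deterministic set of short lattice vectors: the law-of-large-numbers count of points in each thin spherical shell $L_i$, and the Markov-type bound that the spherical caps cover all but a $2^{-\Theta(d^2)}$ fraction of the sphere, were both established for random point sets. This is precisely where the Gaussian heuristic does the work --- it postulates that the number of lattice vectors in every sufficiently nice region agrees with the corresponding volume ratio, which is exactly the input those two arguments require --- so the proofs of Theorems~\ref{thm:sphere} and~\ref{thm:ball} carry over verbatim in the heuristic setting, and the ``probability $1 - o(1)$'' wording is inherited from them. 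Shells so thin that the heuristic predicts fewer than one lattice vector are simply empty (or contain $2^{o(d)}$ vectors), consistently with how such shells are treated in the proof of Theorem~\ref{thm:ball}; they therefore never form the minimising shell and require no further care.
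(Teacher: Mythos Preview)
Your proof is correct and follows exactly the same approach as the paper: normalize so that $\vol(\cV) = \vol(\cB)$, invoke the Gaussian heuristic to treat the $\alpha^d$ shortest lattice vectors as uniform samples from the ball of radius $\alpha$, and apply Corollary~\ref{thm:ball2} with $\beta = \alpha$. The paper's own proof is a three-sentence version of what you wrote; your additional paragraph about transferring the probabilistic ingredients is a reasonable elaboration of what the paper leaves implicit in the phrase ``essentially uniformly distributed.''
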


\begin{proof}
Without loss of generality, suppose that $\vol(\cV) = \vol(\cB)$. Under the Gaussian heuristic, the points $L$ are then essentially uniformly distributed in the ball of radius $\alpha$. Applying Corollary~\ref{thm:ball2} with $\alpha = \beta$, the result then follows. \qedhere
\end{proof}

\subsection{Heuristic assumptions}

Assuming that \cite[Heuristic assumption 1]{doulgerakis19} holds, as discussed in the introduction this would give us tight bounds on the success probability of the randomized iterative slicer from \cite{doulgerakis19}. However, these results would then contradict the claimed lower bound on the success probability from~\cite[Equation (37)]{doulgerakis19}. The source of this contradiction is \cite[Heuristic assumption 1]{doulgerakis19}, which reads as follows.\footnote{For details and definitions of $D_{\vc{t} + \cL, s}$ and $\operatorname{Slice}_L(\vc{t}')$, we refer the reader to \cite{doulgerakis19}.}
\begin{heuristic}[Randomized slicing, DLW] \label{heur:rand}
For $L \subset \cL$ and large $s$,
\begin{align}
\Pr_{\vc{t}' \sim D_{\vc{t} + \cL, s}}\Big[\operatorname{Slice}_L(\vc{t}') \in \cV\Big] \approx \frac{\vol(\cV)}{\vol(\cV_L)} \, . \label{eq:rand}
\end{align}
\end{heuristic}
In fact, the randomized slicer is biased towards finding as short solutions as possible, and the probability of returning the unique representative from $\cV$ may be much larger than $\vol(\cV)/\vol(\cV_L)$. We therefore propose using the following heuristic assumption instead:
\begin{heuristic}[Randomized slicing, new] \label{heur:rand2}
For $L \subset \cL$ and large $s$,
\begin{align}
\Pr_{\vc{t}' \sim D_{\vc{t} + \cL, s}}\Big[\operatorname{Slice}_L(\vc{t}') \in \cV\Big] \gtrsim \frac{\vol(\cV)}{\vol(\cV_L)} \, . \label{eq:rand2}
\end{align}
\end{heuristic}
To motivate this new assumption, consider the reverse process of starting at the sliced solution vector $\vc{t}'' = \operatorname{Slice}_L(\vc{t}')$, and adding lattice vectors of length at most $\alpha \lambda_1(\cL)$ to obtain longer and longer vectors in the coset $\vc{t} + \cL$. Now, given an initial sampled vector $\vc{t}' \sim D_{\vc{t} + \cL, s}$, the probability of reaching $\vc{t}''$ out of all possible solution vectors in $\vc{t} + \cL$ is essentially proportional to the number of paths from $\vc{t}''$ to $\vc{t}'$ through the above process of adding lattice vectors of length at most $\alpha \lambda_1(\cL)$ to $\vc{t}''$. Starting from a shorter vector, the tree of potential paths to $\vc{t}''$ is likely to be wider, and there are likely more such paths reaching $\vc{t}''$. 

Assuming that indeed, the success probability is \textit{at least} proportional to the ratio of these volumes, we obtain the CVPP complexities described in Theorem~\ref{thm:cvpp} in the introduction. Here we simply replaced the upper bound on $p_{\alpha}$ from~\cite{doulgerakis19} by the upper bound obtained via the volume of approximate Voronoi cells, and otherwise applied the same techniques of nearest neighbor speed-ups.

%\begin{corollary}[Success probability of the randomized iterative slicer, assuming Heuristic 1] Let $\alpha \in (1, \sqrt{2})$, and let $L \subset \cL \setminus \{\vc{0}\}$ consist of the $\alpha^d$ shortest non-zero vectors of a lattice $\cL$. Then, assuming the Gaussian heuristic holds, and assuming \cite[Heuristic 1]{doulgerakis19} holds, with probability $1 - \negl(d)$ the success probability of the randomized iterative slicer is:
%\begin{align}
%p = \left(\frac{4 \alpha^2 - 4}{\alpha^4}\right)^{d/2 + o(d)}.
%\end{align}
%\end{corollary}

%More precisely, given a lattice $\cL$ and a target vector $\vc{t}$, the iterative slicing procedure for a list $L$ can be seen as a (random) walk on the graph $G = (V, E)$ with vertices $V = \vc{t} + \cL$ and edges $E = \{(\vc{t}', \vc{t}''): \vc{t}' - \vc{t}'' \in L\}$ -- an edge exists between two vectors $\vc{t}', \vc{t}'' \in \vc{t} + \cL$ if and only if their difference is a short vector in our list $L$. Assuming the list contains all vectors of length at most $\alpha \lambda_1(\cL)$, we can alternatively state $E = \{(\vc{t}', \vc{t}''): \vc{t}' - \vc{t}'' \in L\}$. 

\subsection{The low-memory regime}

As Theorem~\ref{thm:cvpp} describes complexities even for the regime of $2^{\eps d + o(d)}$ memory with small $\eps$, let us study the asymptotic behavior as the memory is actually subexponential or even polynomial in $d$.

First, note that for the lower bound on the volume, we essentially only needed Equation~\eqref{eq:cap}, which holds even when $\alpha = o(1)$ scales with $d$. (See also~\cite[Lemmas 4.1 and 4.2]{pivovarov07} for absolute bounds.) For the upper bounds, we needed that the list $L$ properly covers the sphere, and we argued that $n = 2 d^2 \vol(\cB) / \vol(\cC_{\vc{v}})$ suffices to cover enough of the sphere with high probability. We can therefore extend these results all the way up to the regime of polynomial space. Note that for small $\alpha = 1 + \eps$, Theorem~\ref{thm:ball} gives:
\begin{align}
\vol(\cV_L) = \left(\frac{1}{\sqrt{8 \eps}} + O(\sqrt{\eps})\right)^{d + o(d)} \vol(\cB).
\end{align}
Substituting suitable values of $\alpha$, we get the following results.
\begin{proposition}[Polynomially many points from the unit ball] \label{thm:ball-eps}
Let $L \subset \cB$ consist of $n = d^{\Theta(1)}$ uniformly random vectors from $\cB$. Then, with probability $1 - o(1)$ over the randomness of $L$, we have $\vol(\cV_L) = 2^{\frac{1}{2} d \log_2 d + o(d \log d)} \vol(\cB)$.
\end{proposition}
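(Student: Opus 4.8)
The plan is to deduce the Proposition from Theorem~\ref{thm:ball} (and its proof) by choosing the size exponent so that it matches $n$. Write $n=\alpha^d$ with $\alpha:=n^{1/d}$; since $n=d^{\Theta(1)}$ we get $\log\alpha=\tfrac1d\log n=\Theta(\log d/d)$, so $\alpha=1+\eps$ with $\eps=\Theta(\log d/d)=o(1)$, which for large $d$ lies well below $\sqrt2$. We are therefore in the first regime of Theorem~\ref{thm:ball}, which in the form recorded just before the Proposition reads
\begin{align}
\vol(\cV_L)=\left(\tfrac{1}{\sqrt{8\eps}}+O(\sqrt\eps)\right)^{d+o(d)}\vol(\cB).
\end{align}
Substituting $\eps=\Theta(\log d/d)$ makes the base $\Theta\!\bigl(\sqrt{d/\log d}\,\bigr)$, so taking binary logarithms the exponent of $\vol(\cB)$ is $(d+o(d))\bigl(\tfrac12\log_2 d+O(\log\log d)\bigr)=\tfrac12 d\log_2 d+o(d\log d)$, using $d\log\log d=o(d\log d)$. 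This is the claimed estimate. The substance of the argument is thus the claim --- stated, but only sketched, in the paragraph preceding the Proposition --- that the proof of Theorem~\ref{thm:ball}, equivalently that of the underlying Theorem~\ref{thm:sphere}, still works when $\alpha=1+o(1)$ and $n$ is merely polynomial in $d$.

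For the \textbf{lower bound} this is routine: the only analytic input is the cap estimate~\eqref{eq:cap}, which remains valid (with the absolute bounds of~\cite[Lemmas 4.1 and 4.2]{pivovarov07}) even as the base radius tends to $0$, so the first-moment inequality $\vol\!\bigl(\cB\setminus\bigcup_{\vc v\in L}\cC_{\vc v/r_0}\bigr)\ge\vol(\cB)-n\,\vol(\cC_{\vc v/r_0})$ holds verbatim with $r_0=\sqrt{\alpha^2/(4\alpha^2-4)}$. The only extra remark needed to pass from the ball to the sphere picture is that with probability $1-o(1)$ every $\vc v\in L$ has $\|\vc v\|\ge 1-O(\log d/d)$: a uniform point of $\cB$ lies in $(1-t)\cB$ with probability $(1-t)^d$, and $n(1-t)^d\to0$ for $t$ of order $\log d/d$, so Markov applies; hence the caps $r_0\cB\setminus\cH_{\vc v}$ differ from those in the sphere case by only a $(1+o(1))$ factor.

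The \textbf{upper bound} is the step I expect to be the main obstacle. One must show $\vol(\cV_L)\approx\vol(\cV_L\cap r\cB)$ for a radius $r$ of the right order, i.e.\ that the $n$ caps $\cC_{\vc v/r}$ cover all but a negligible fraction of $\cS$; but at $r=r_0$ each cap has relative measure $C(1/(2r_0))=\alpha^{-d+o(d)}=n^{-1+o(1)}$ on $\cS$, so covering $\cS$ would need super-polynomially many of them. The remedy is to work at $r=(1+\eta)r_0$ with a \emph{constant} $\eta>0$ (in the constant-$\alpha$ setting of Theorem~\ref{thm:sphere} one may take $\eta=o(1)$, but here $\alpha^2-1=\Theta(\log d/d)$ forces $\eta$ to be a genuine constant): this raises each cap's relative measure to $n^{-1+\Theta(\eta)+o(1)}$, so $n$ caps now suffice for the Markov argument of Theorem~\ref{thm:sphere} to leave only an $e^{-d^2}$ fraction of $\cS$ uncovered, while multiplying $r^d$ only by $(1+\eta)^d=2^{\Theta(d)}=2^{o(d\log d)}$, which the exponent absorbs. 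To see that the uncovered directions are negligible, recall that for $n=\omega(d)$ the polytope $\cV_L$ is finite with probability $1-o(1)$ by Wendel's formula~\cite{wendel62}, and a short net bound shows that with probability $1-o(1)$ the quantity $\min_{\|u\|=1}\max_{\vc v\in L}\langle u,\vc v\rangle$ is $\Omega(1/\sqrt d)$, so $\cV_L\subseteq\poly(d)\cdot\cB$; hence the part of $\cV_L$ outside $r\cB$ has volume at most $e^{-d^2}\poly(d)^d\,\vol(\cB)=2^{-\Omega(d^2)}\vol(\cB)$, which is $o(1)$ times the lower bound $r_0^{d+o(d)}\vol(\cB)$ already established. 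Therefore $\vol(\cV_L)=(1+o(1))\,\vol(\cV_L\cap r\cB)\le(1+o(1))\,r^d\vol(\cB)=2^{\frac12 d\log_2 d+o(d\log d)}\vol(\cB)$, which together with the lower bound gives the Proposition; the implied constants in $o(d\log d)$ depend on the degree of the polynomial $n$.
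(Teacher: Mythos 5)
Your proof is correct and follows the same route as the paper: write $n=\alpha^d$, observe $\alpha=1+\eps$ with $\eps=\Theta(\log d/d)$, and substitute into the $\bigl(\frac{1}{\sqrt{8\eps}}+O(\sqrt\eps)\bigr)^{d+o(d)}$ asymptotic recorded just before the Proposition (the paper's entire proof is the one-line substitution $\alpha=d^{\Theta(1/d)}=1+\Theta(\log d)/d$). Where you go beyond the paper is in scrutinizing whether the proof of Theorem~\ref{thm:ball}/\ref{thm:sphere} actually carries over to $\eps=o(1)$; the paper's preceding paragraph asserts this but does not quantify it. Your observation that at $r=r_0$ each cap has relative measure $\approx n^{-1}$, so the Markov covering argument would leave a constant (not $e^{-d^2}$) fraction of $\cS$ uncovered, is a genuine subtlety the paper glosses over; and your fix --- inflate the radius to $(1+\eta)r_0$ with \emph{constant} $\eta$, note that $(1+\eta)^d=2^{O(d)}=2^{o(d\log d)}$ is absorbed --- is the right one. (In the constant-$\alpha$ regime of Theorem~\ref{thm:sphere} the analogous inflation is $1+o(1)$ because $\alpha^2-1=\Theta(1)$; here $\alpha^2-1=\Theta(\log d/d)$ forces $\eta=\Theta(1)$, exactly as you say.)

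One caveat on your upper-bound finish: the claim that a net bound gives $\min_{\|u\|=1}\max_{\vc v\in L}\langle u,\vc v\rangle=\Omega(1/\sqrt d)$ with probability $1-o(1)$, hence $\cV_L\subseteq\poly(d)\cdot\cB$, fails for small polynomials. By Wendel's formula, for $n\le d$ the polytope $\cV_L$ is unbounded with probability $1$, and for $n=\Theta(d)$ it is unbounded with constant probability, so no such containment can hold. The net argument you invoke needs $n\gg d$ (the union bound over a $\delta$-net of size $e^{\Theta(d)}$ needs per-direction failure probability $e^{-\omega(d)}$, which only holds once $n$ exceeds a sufficiently large multiple of $d$). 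In other words, the Proposition itself is only meaningful when the degree of the polynomial exceeds $1$, and your ``constants depend on the degree'' remark should be sharpened to ``the argument requires the degree to be large enough that $\cV_L$ is bounded with probability $1-o(1)$.'' Since the paper makes the same implicit assumption, this does not put you at odds with it, but it is worth stating explicitly.
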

\begin{proof}
This follows from substituting $\alpha = d^{\Theta(1/d)} = 1 + \Theta(\log d) / d$. \qedhere
\end{proof}
In the application of CVPP algorithms, Proposition~\ref{thm:ball-eps} shows that heuristically, we obtain a smooth trade-off between enumeration and using exact Voronoi cells -- Hanrot--Stehl\'{e}~\cite[Theorem 4]{hanrot07} previously showed that enumeration has a cost of $d^{d/2 + o(d)}$ time for solving CVP in the worst case, with polynomial memory.
\begin{proposition}[Subexponentially many points from the unit ball] \label{thm:ball-eps2}
Let $L \subset \cB$ consist of $n = 2^{\Theta(d^{\gamma})}$ uniformly random vectors from $\cB$. Then, with probability $1 - o(1)$ over the randomness of $L$, we have $\vol(\cV_L) = 2^{\frac{1}{2} (1 - \gamma) d \log_2 d + o(d \log d)} \vol(\cB)$.
\end{proposition}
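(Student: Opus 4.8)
The plan is to obtain this as a direct substitution into Theorem~\ref{thm:ball}, exactly in the spirit of the one-line proof of Proposition~\ref{thm:ball-eps}, but tracking how the small-$\alpha$ expansion behaves when $\alpha - 1$ itself shrinks with $d$. First I would set $\alpha := n^{1/d}$, so that $n = \alpha^d$ matches the hypothesis of Theorem~\ref{thm:ball}; with $n = 2^{\Theta(d^{\gamma})}$ this gives $\alpha = 2^{\Theta(d^{\gamma-1})}$, and since the exponent $\Theta(d^{\gamma-1}) \to 0$ for $\gamma \in (0,1)$, we have $\alpha = 1 + \eps$ with $\eps = \Theta(d^{\gamma-1}) \to 0$. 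Plugging this into the small-$\alpha$ form of Theorem~\ref{thm:ball} recalled just before Proposition~\ref{thm:ball-eps},
\begin{align}
\vol(\cV_L) = \left(\tfrac{1}{\sqrt{8\eps}} + O(\sqrt{\eps})\right)^{d + o(d)} \vol(\cB),
\end{align}
the additive $O(\sqrt{\eps})$ term is negligible next to $1/\sqrt{8\eps} = \Theta(d^{(1-\gamma)/2}) \to \infty$, so the base equals $\Theta(d^{(1-\gamma)/2})$.

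Next I would take base-$2$ logarithms to get $\log_2 \vol(\cV_L) = (d + o(d))\bigl(\tfrac{1-\gamma}{2}\log_2 d + O(1)\bigr) + \log_2 \vol(\cB)$. Here the cross terms $o(d)\cdot\tfrac{1-\gamma}{2}\log_2 d$ and $d\cdot O(1)$ are both $o(d\log d)$, so the whole expression collapses to $\tfrac12(1-\gamma)\,d\log_2 d + o(d\log d)$, which is the claimed estimate.

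The one point that genuinely needs an argument rather than pure substitution is that Theorem~\ref{thm:ball} — more precisely the covering step in the proof of Theorem~\ref{thm:sphere} on which it rests — remains valid when $\alpha = 1 + o(1)$ and $n$ is only subexponential. For the upper bound we need $n \gtrsim 2d^2\,\vol(\cB)/\vol(\cC_{\vc{v}/r})$ for a radius $r$ close to $r_0 = \Theta(d^{(1-\gamma)/2})$; since $\vol(\cB)/\vol(\cC_{\vc{v}/r_0}) = \alpha^{d + o(d)} = n^{1+o(1)}$, the extra polynomial factor $2d^2$ is absorbed by slightly decreasing $r$, which alters $r_0^{d}$ only by a factor $2^{o(d\log d)}$. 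This is precisely the remark preceding Proposition~\ref{thm:ball-eps} that the results extend down to subexponential (indeed polynomial) space, so I expect this verification to be the only mildly delicate step; everything else is bookkeeping of the $o(d\log d)$ error terms.
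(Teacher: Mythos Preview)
Your proposal is correct and follows exactly the paper's approach: the paper's proof is the single line ``This follows from substituting $\alpha = \exp \Theta(d^{\gamma-1}) = 1 + \Theta(d^{\gamma - 1})$,'' and you have simply unpacked this substitution in detail, including the justification (also present in the paper, just before Proposition~\ref{thm:ball-eps}) that the covering argument extends to the $\alpha = 1 + o(1)$ regime. Your more careful tracking of the $o(d\log d)$ error terms is a welcome elaboration of what the paper leaves implicit.
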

\begin{proof}
This follows from substituting $\alpha = \exp \Theta(d^{\gamma-1}) = 1 + \Theta(d^{\gamma - 1})$.
\end{proof}
This matches results from e.g.\ \cite{dadush14}. To illustrate Proposition~\ref{thm:ball-eps2} with an example, we expect to be able to solve CVPP with query time $d^{d/4 + o(d)}$ when using $2^{\Theta(\sqrt{d})}$ memory, or we can match the average-case complexity of enumeration with a query time complexity of $d^{d/(2e) + o(d)}$ using $2^{\Theta(d^{1 - 1/e})} \approx 2^{\Theta(d^{0.63})}$ memory.

\providecommand{\bysame}{\leavevmode\hbox to3em{\hrulefill}\thinspace}
\providecommand{\MR}{\relax\ifhmode\unskip\space\fi MR }
% \MRhref is called by the amsart/book/proc definition of \MR.
\providecommand{\MRhref}[2]{%
  \href{http://www.ams.org/mathscinet-getitem?mr=#1}{#2}
}
\providecommand{\href}[2]{#2}

\appendix

\section{The Sommer--Feder--Shalvi iterative slicer}

We briefly describe some more details on previous, related work in these appendices, starting with the iterative slicer of Sommer--Feder--Shalvi~\cite{sommer09}. This algorithm provides an elementary, greedy strategy to attempt to find a closest vector to a given target vector $\vc{t}$, given a list of lattice points $L \subset \cL$, which always finds a solution when $L = \cR$ is the set of relevant vectors of the lattice. To do this, note that the shortest representative $\vc{t}'$ in the coset of the lattice $\vc{t} + \cL$ is necessarily contained in the Voronoi cell of the lattice, and therefore $\vc{0}$ is the closest lattice vector to $\vc{t}'$. This implies that $\vc{t} - \vc{t}'$ is the closest lattice vector to $\vc{t}$, and so finding the shortest representative $\vc{t}' \in \vc{t} + \cL$ is equivalent to solving CVP for $\vc{t}$.

To find this shortest representative, given $\vc{t}$ and a list of lattice vectors $L \subset \cL$, the algorithm follows the same approach of e.g.\ lattice sieving algorithms~\cite{nguyen08, micciancio10b, laarhoven16sac}: we start with $\vc{t}' = \vc{t}$, and we repeatedly try to find vectors $\vc{v} \in L$ such that $\vc{t}' \leftarrow \vc{t}' - \vc{v}$ is a shorter vector in the coset $\vc{t} + \cL$. If no more such reductions can be done, we terminate and hope that the algorithm found the shortest representative.

Summarizing, the iterative slicer can be succinctly described through the pseudocode of Algorithm~\ref{alg:slicer}.

\begin{algorithm}[!ht]
\caption{The Sommer--Feder--Shalvi iterative slicer~\cite{sommer09}}
\label{alg:slicer}
\begin{algorithmic}[1]
\Require The relevant vectors $\cR \subset \cL$ and a target $\vc{t} \in \mathbb{R}^d$
\Ensure The algorithm outputs a closest lattice vector $\vc{s} \in \cL$ to $\vc{t}$
\State Initialize $\vc{t}' \leftarrow \vc{t}$
\For{\textbf{each} $\vc{r} \in \cR$}
	\If{$\|\vc{t}' - \vc{r}\| < \|\vc{t}'\|$}
		\State Replace $\vc{t}' \leftarrow \vc{t}' - \vc{r}$ and restart the \textbf{for}-loop
	\EndIf
\EndFor
\State \Return $\vc{s} = \vc{t} - \vc{t}'$
\end{algorithmic}
\end{algorithm}

\section{The Doulgerakis--Laarhoven--De Weger randomized slicer}

As the iterative slicer of Sommer--Feder--Shalvi often does not succeed, when using as input only a subset of the relevant vectors of the lattice, Doulgerakis--Laarhoven--De Weger proposed the following heuristic variant of the slicer. Instead of using the list of relevant vectors for reductions, first we only use a subset of the relevant vectors. Since there is no guarantee that the slicer then returns a vector from the exact Voronoi cell, and the output may not be a solution, we repeat the algorithm many times on rerandomized versions of the same target vector. What this means is that instead of reducing $\vc{t}' = \vc{t}$ with the iterative slicer, we sample $\vc{t}' \sim \vc{t} + \cL$ at random (e.g.\ from a discrete Gaussian distribution over the coset $\vc{t} + \cL$) and repeat the algorithm on many such samples. This algorithm is given in pseudocode in Algorithm~\ref{alg:slicer3}.

In the worst case, each of these reductions will end up on the same path and reduce to the same, wrong solutions, thus making no progress. In practice however it was observed that, if the iterative slicer find a solution in a single run with probability $p \ll 1$, then repeating the algorithm $K$ times with such randomized target vectors leads to an overall success probability proportional to $K \times p$. This is purely an experimental, heuristic tweak -- there are no theoretical guarantees that reducing such a shifted target vector gives ``fresh'' results.

\begin{algorithm}[!ht]
\caption{The Doulgerakis--Laarhoven--De Weger randomized slicer~\cite{doulgerakis19}}
\label{alg:slicer3}
\begin{algorithmic}[1]
\Require A list $L \subset \cL$ and a target $\vc{t} \in \mathbb{R}^d$
\Ensure The algorithm outputs a closest lattice vector $\vc{s} \in \cL$ to $\vc{t}$
\State $\vc{s} \leftarrow \vc{0}$
\Repeat
	\State Sample $\vc{t}' \sim D_{\vc{t} + \cL,s}$
	\For{\textbf{each} $\vc{r} \in L$}
		\If{$\|\vc{t}' - \vc{r}\| < \|\vc{t}'\|$}
			\State Replace $\vc{t}' \leftarrow \vc{t}' - \vc{r}$ and restart the \textbf{for}-loop
		\EndIf
	\EndFor
	\If{$\|\vc{t}' - \vc{0}\| < \|\vc{t} - \vc{s}\|$}
		\State $\vc{s} \leftarrow \vc{t} - \vc{t}'$
	\EndIf
\Until{$\vc{s}$ is a closest lattice vector to $\vc{t}$}
\State \Return $\vc{s}$
\end{algorithmic}
\end{algorithm}

\section{The Doulgerakis--Laarhoven--De Weger complexity analysis}

To analyze the heuristic time and space complexities of the randomized slicer, Doulgerakis--Laarhoven--De Weger made the following assumptions. First, the vectors from the list $L \subset \cL$ are assumed to follow a spherically symmetric distribution, and their lengths are assumed to follow the prediction obtained via the Gaussian heuristic. Similarly, the exact Voronoi cell of the lattice is modeled as a ball of a certain radius, such that the volume of the ball matches the volume of the lattice. Containment of the reduced vector $\vc{t}' \in \vc{t} + \cL$ in $\cV$ was then estimated to be equivalent to the condition $\|\vc{t}'\| \leq \lambda_1(\cL)$.

Then, to analyze the success probability of the slicing routine, first it was observed that if $\vc{t}'$ has a rather large norm, then it is likely that $L$ contains a vector $\vc{v}$ such that $\vc{t}' - \vc{v}$ is shorter than $\vc{t}'$; progress can then still be made with ease. There is a phase transition at a certain value $\beta$ such that
\begin{itemize}
\item If $\|\vc{t}'\| > \beta$, then with probability at least $2^{-o(d)}$ there exists a vector $\vc{v} \in L$ such that $\|\vc{t}' - \vc{v}\| \leq \|\vc{t}'\|$;
\item If $\|\vc{t}'\| < \beta$, then with probability at most $2^{-\Theta(d)}$ there exists a vector $\vc{v} \in L$ such that $\|\vc{t}' - \vc{v}\| \leq \|\vc{t}'\|$.
\end{itemize}
After reaching norm $\beta$, the algorithm may still find a solution, but each additional reduction step is exponentially small to occur. To obtain a bound on the overall success probability of the algorithm, the authors studied the probability that after \emph{exactly one more reduction} with the list $L$, we reach the desired norm $\lambda_1(\cL)$, so that $\vc{t}'$ is expected to be contained in $\cV$. This is of course only one way for the algorithm to ``reach'' the Voronoi cell, and it may also happen that after two, three, and any number of additional reductions we still reach the solution, albeit with exponentially small probability. The analysis based on finding the solution in exactly one step, jumping from norm $\beta$ to $\lambda_1(\cL)$, is therefore only a lower bound on the overall success probability of the algorithm. This directly leads to the bound on the success probability stated in Equation~\eqref{eq:dlw111}.

Then, given this analysis of the algorithm, the authors obtained a lower bound on the success probability $p$ of a single run of the (randomized) iterative slicer. If one then makes the additional assumption that the success probability of the algorithm is \emph{equal} to the ratio of the volume of the exact cell over the volume of the approximate Voronoi cell, then this would immediately yield a lower bound on the ratio of these volumes as well. This would then lead to the conjectured lower bound on the ratio of the volumes given in Equation~\eqref{eq:dlw}.

As shown in this paper, the latter step is incorrect, as we give tight bounds on the ratio of these volumes, and show that the inverse of the expression from~\eqref{eq:dlw} is \emph{not} a lower bound on the ratio of the volumes.

%\begin{figure}[h]
%\includegraphics[width=12cm]{}
%\caption{Experimental values for the average norm after reducing random target vectors with a preprocessed list of $\alpha^d$ vectors, with $\alpha \in \{2^{0.15}, 2^{0.20}, 2^{0.25}, 2^{0.30}, 2^{0.35}\}$. The higher curves correspond to worse approximations, and lower values $\alpha$.}
%\end{figure}

\end{document}